\documentclass[11pt]{article}

\usepackage[a4paper,margin=1in]{geometry}
\usepackage{amsmath,amssymb,amsthm,mathtools,mathrsfs}
\usepackage{microtype}
\usepackage{xcolor}
\usepackage{booktabs}
\usepackage{float}
\usepackage{algorithm}
\usepackage{algpseudocode}
\usepackage{graphicx}
\usepackage[numbers,sort&compress]{natbib}
\usepackage[
  colorlinks=true,
  linkcolor=blue!55!black,
  citecolor=blue!55!black,
  urlcolor=blue!55!black
]{hyperref}

\newtheorem{theorem}{Theorem}[section]

\newtheorem{lemma}[theorem]{Lemma}

\theoremstyle{definition}

\newtheorem{remark}[theorem]{Remark}

\newcommand{\OPT}{\mathrm{OPT}}
\newcommand{\ADV}{\mathrm{ADV}}
\newcommand{\E}{\mathbb{E}}
\newcommand{\Prb}{\mathbb{P}}
\newcommand{\eps}{\varepsilon}
\newcommand{\cD}{\mathcal{D}}
\def\cost{\texttt{cost}}

\title{Online Line Aggregation with Deadlines:\\
Randomized Guarantees and Learning-Augmented Tradeoffs}
\author{Tianhang Lu\\
\small School of Mathematics Science, Ocean University of China\\
\small Qingdao 266100, Shandong, China\\
\small \texttt{lutianhang@stu.ouc.edu.cn}}
\date{}

\begin{document}

\maketitle

\begin{abstract}
We study online line aggregation with deadlines, where requests arrive over time on the positive half-line and a service at location $y$ clears all pending requests in $[0,y]$ at cost $y$.
In the classical adversarial setting, we propose an $e$-competitive randomized algorithm against an oblivious adversary and prove a matching lower bound.
Thus $e$ is the optimal randomized competitive ratio.
We then consider advice in the form of an offline feasible solution.
For every confidence parameter $\lambda\in(0,1]$, our deterministic learning-augmented algorithm is $(1+3/\lambda)$-robust and $(1+3\lambda)$-consistent.
We also propose a randomized learning-augmented algorithm that is $(e+e/\lambda)$-robust and $(e-1+\lambda)$-consistent against an oblivious adversary.
For the offline problem, we present a polynomial-time dynamic programming algorithm.
Numerical experiments complement the worst-case analysis: accurate advice lowers service costs, while both learning-augmented algorithms remain stable as the advice becomes increasingly noisy.

\end{abstract}

\section{Introduction}
\label{sec:intro}
Aggregation is a basic mechanism for exploiting economies of scale: several requests can be completed by one shared service rather than by many separate actions.
This principle appears in coordinated replenishment~\cite{askoy1988multi,goyal1989joint,joneja1990joint,khouja2008review}, lot sizing and logistics~\cite{brahimi2006single,jans2008modeling,quadt2008capacitated,karimi2003capacitated}, and communication systems~\cite{yuan2003synchronization}.
An online policy must decide not only when to serve but also how far a service should extend.
Extending a service can clear additional requests at little marginal cost, but using that capacity too early may forfeit a better aggregation opportunity later.

We study this decision problem on the positive half-line.
A request is specified by a release time, a location, and a deadline.
A service at location $y$ costs $y$ and clears every pending request in $[0,y]$.
Every request must be served between its release time and deadline, and the objective is to minimize the total service cost.
We call this problem \emph{line aggregation with deadlines} (LAD).
The line structure is simple enough to expose the central online decision---how far to extend a mandatory service---while retaining nontrivial interactions among requests released at different times.

LAD is the path specialization of the multi-level aggregation problem with deadlines (MLAP-D).
In MLAP-D, requests arrive at the vertices of a rooted weighted tree and a service is a rooted subtree that clears all pending requests it contains.
Bienkowski et al.~\cite{bienkowski2020mlap} developed an $O(D^22^D)$-competitive algorithm for trees of fixed depth $D$, and Buchbinder et al.~\cite{buchbinder2017depth} later obtained an $O(D)$-competitive algorithm.
Azar and Touitou~\cite{azar2019framework} developed a general framework for aggregation and related metric problems with delay or deadlines.
More recent work gives depth- and structure-parameterized guarantees~\cite{mcmahan2021dcompetitive,turoczy2025memory}.
Mari et al.~\cite{mari2024online} study additive-delay multi-level
aggregation under Poisson arrivals and obtain a constant ratio of
expectations, providing a complementary stochastic-input perspective.
For paths, the optimal deterministic competitive ratio for MLAP-D is $4$~\cite{bienkowski2021new}.

This tight deterministic result leaves open how much an oblivious adversary can gain against a randomized policy.
The deterministic lower bound does not apply directly to randomized algorithms, so it remains unclear how much randomization can improve worst-case performance.
Answering this question requires both an improved randomized upper bound and a matching lower bound for randomized algorithms.

Advice provides a second way to improve online decisions.
Learning-augmented algorithms use possibly inaccurate advice while seeking both \emph{consistency}, which measures performance when the advice is reliable, and \emph{robustness}, which protects against poor advice~\cite{purohit2018improving,lykouris2021competitive,bamas2020primaldual,mitzenmacher2022algorithms}.
For aggregation with deadlines, Dinitz, Fineman, and Umboh~\cite{dinitz2025jrpdeadlines} study predicted deadlines in a nonclairvoyant joint-replenishment model.

Finally, the offline structure of LAD is of independent interest.
For line aggregation with linear waiting costs, an exact polynomial-time dynamic program is known~\cite{bienkowski2013chain}.
For general offline MLAP-D, the problem is APX-hard even on trees of depth two~\cite{bienkowski2015approximation}.
For arbitrary trees, a polynomial-time $2$-approximation algorithm is known~\cite{bienkowski2021new}.

\begin{table}[t]
    \centering
    \caption{Overview of the main theoretical results.}
    \label{tab:results-overview}
    \small
    \setlength{\tabcolsep}{5pt}
    \begin{tabular}{@{}p{0.22\linewidth}p{0.30\linewidth}p{0.38\linewidth}@{}}
        \toprule
        Setting & Method & Main guarantee \\
        \midrule
        No advice
            & Shifted grid
            & $e$-competitive, and optimal \\
        \shortstack[l]{Advice\\(deterministic)}
            & \shortstack[l]{Advice-guided\\doubling}
            & $R=1+3/\lambda$, $C=1+3\lambda$ \\
        \shortstack[l]{Advice\\(randomized)}
            & \shortstack[l]{Advice-guided\\Shifted grid}
            & $R=e+e/\lambda$, $C=e-1+\lambda$ \\
        \shortstack[l]{Complete\\information}
            & Dynamic programming
            & $O(n^3)$ time, $O(n^2)$ space \\
        \bottomrule
    \end{tabular}
\end{table}

In this paper, we consider LAD under four different settings.
We first propose an $e$-competitive algorithm and prove a matching lower bound.
In the learning-augmented setting, for every $\lambda\in(0,1]$, we give a deterministic algorithm that is $(1+3/\lambda)$-robust and $(1+3\lambda)$-consistent, together with a randomized algorithm that is $(e+e/\lambda)$-robust and $(e-1+\lambda)$-consistent against an oblivious adversary.
We also develop an offline dynamic programming algorithm with running time $O(n^3)$ and space $O(n^2)$, where $n$ is the number of requests.
Finally, numerical experiments quantify how advice quality and the confidence parameter affect empirical performance and illustrate the complementary behavior of the deterministic and randomized learning-augmented algorithms.
Table~\ref{tab:results-overview} summarizes our main theoretical guarantees.
Here, $R$ and $C$ denote robustness and consistency, respectively.

The rest of this paper is organized as follows: Section~\ref{sec:preliminaries} defines the model and advice framework.
Section~\ref{sec:randomized} then establishes the tight randomized competitive ratio.
Building on these results, Sections~\ref{sec:learning-augmented-deterministic} and~\ref{sec:learning-augmented-randomized} present the deterministic and randomized learning-augmented trade-offs, respectively.
Section~\ref{sec:offlineoptimal} subsequently gives the offline algorithm.
Section~\ref{sec:experiments} reports the numerical experiments.
Finally, Section~\ref{sec:conclusion} summarizes the main findings and concludes the paper.

\section{Preliminaries}
\label{sec:preliminaries}
\subsection{Line Aggregation with Deadlines}
\label{subsec:model}

An instance $I$ consists of a finite sequence of requests on the half-line $\mathbb{R}_{\ge 0}$ rooted at the origin.
Each request is a triple $r=(\tau,x,d)$, where $\tau\ge 0$ is its release time, $x>0$ is its location, and $d\ge \tau$ is its deadline.
A request remains \emph{pending} until it is served.
When releases and deadlines occur at the same time, newly released requests are revealed before the deadline is processed.

A solution is represented by a set of time-location pairs.
We write this set as $X=\{(t_1,x_1),\ldots,(t_n,x_n)\}$.
Here, $(t_i,x_i)$ denotes a service performed at location $x_i$ at time $t_i$, and the cost of this service is $x_i$.
The solution $X$ is feasible if, for every request $r=(\tau,x,d)$, there exists a pair $(t_i,x_i)\in X$ such that $\tau\leq t_i\leq d$ and $x_i\geq x$.
The cost of $X$ is the sum of the costs of all services in $X$.
For any algorithm $\mathcal{A}$ that produces a feasible solution, let $\cost(\mathcal{A})$ denote the cost of its output.

\subsection{Advice.}
\label{subsec:advice}
The algorithm receives advice online in the form of suggested service locations, which may be provided at arbitrary times.
At any time $t$, the advice may suggest a service at location $a_t>0$.
For a request $r=(\tau,x,d)$, the advice \emph{covers} $r$ at time $d\geq t\geq\tau$ if $a_t\geq x$.
We denote the first such time by $t_{\text{cov}}(r)$ and call it the \emph{first coverage time} of $r$.
We call the advice \emph{feasible} if it eventually covers every request, and assume this condition throughout this paper.

\subsection{Performance Guarantees}
\label{subsec:performance}

In the learning-augmented setting, the algorithm receives advice in the form of an offline feasible solution.
For each instance $I$, let $\OPT$ and $\ADV$ denote an optimal offline solution and the solution prescribed by the advice, respectively.
A randomized online algorithm $\mathcal{A}$ is called \textit{$r$-competitive}, or \textit{$r$-robust}, if $\mathbb{E}[\cost(\mathcal{A})]\le r\cdot\cost(\OPT)$ for any instance $I$.
Moreover, $\mathcal{A}$ is called \textit{$c$-consistent} if $\mathbb{E}[\cost(\mathcal{A})]\le c\cdot\cost(\ADV)$ for any instance $I$ and any feasible advice $\ADV$.
For deterministic algorithms, the expectations are omitted.
Robustness measures protection against any feasible advice, whereas consistency measures performance when the advice is reliable.

\subsection{The \textmd{\textsc{Doubling}} Algorithm}
\label{subsec:doubling}
For each $i\in\mathbb{Z}$, define the service level $b_i=2^i$.
When one or more pending requests reach their deadlines, let $x$ be the largest location among these requests and define $H(x)=\min\{i\in\mathbb{Z}:b_i\geq x\}$.
The \textsc{Doubling} algorithm performs a service at location $b_{H(x)}$.
Since $x\leq b_{H(x)}<2x$, this service serves every request reaching its deadline at that time.
\begin{theorem}\cite{bienkowski2020mlap}.
\label{thm:doubling-benchmark}
The \textup{\textsc{Doubling}} algorithm is $4$-competitive for LAD.
\end{theorem}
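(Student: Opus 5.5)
We prove the bound by a charging argument organized by service levels. For each level $i$, we compare the number of \textsc{Doubling} services at level $b_i$ with the number of $\OPT$ services that reach at least $b_{i-1}$. The starting observation is that \textsc{Doubling} serves at level $b_i$ only when some request $r$ with $b_{i-1}<x_r\le b_i$ reaches its deadline $d_r$ while still pending.

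Fix a level $i$ and consider two consecutive \textsc{Doubling} services at level $\ge i$, at times $s<s'$. Let $r'$ be the request triggering the service at $s'$, and suppose its location satisfies $x_{r'}>b_{i-1}$. The service at $s$ cleared everything in $[0,b_i]$. Since $r'$ was pending at $s'$ with $x_{r'}\le b_i$, it must have been released after $s$. Hence $\OPT$ contains a service in $(s,s']$ at location $\ge x_{r'}>b_{i-1}$.

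It is cleaner to run the argument over disjoint windows. Let $T_i$ be the set of times of \textsc{Doubling} services at level exactly $i$. Between any two consecutive elements $s<s'$ of $T_i$, the request triggering $s'$ has location in $(b_{i-1},b_i]$ and was released after $s$. This is because every service at level $\ge i$ clears $[0,b_i]$, and in particular the service at $s$ did. So the windows $(s,s']$ are pairwise disjoint, and each contains an $\OPT$ service reaching beyond $b_{i-1}$. Counting also the first service, we get
\[
|T_i|\le N_{i}+1,
\]
where $N_i$ is the number of $\OPT$ services of location $>b_{i-1}$. To remove the additive $+1$, we note that the first element of $T_i$ is also triggered by a request in $(b_{i-1},b_i]$. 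That request needs some $\OPT$ service at location $>b_{i-1}$ within its lifetime, and this service lies before the first window. The count therefore becomes
\[
|T_i|\le N_i
\]
by taking the windows to be $(s_{\text{prev}},s']$, with $s_{\text{prev}}=-\infty$ for the first element.

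Summing over levels gives
\[
\cost(\textsc{Doubling})=\sum_i b_i|T_i|\le \sum_i b_i\,\#\{\text{OPT services } y>b_{i-1}\}.
\]
Exchanging the order of summation, each $\OPT$ service at location $y$ contributes $\sum_{i:\,b_{i-1}<y} b_i$. With $b_{k}$ the largest level such that $b_{k-1}<y$, we have $b_{k-1}<y$, so $b_k<2y$, and the sum is a geometric series:
\[
\sum_{i\le k} b_i<2b_k<4y.
\]
This yields the ratio $4$.

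The main obstacle is justifying disjointness. An $\OPT$ service must not be charged twice within the same level, and this depends on the fact that every \textsc{Doubling} service of level $\ge i$ resets $[0,b_i]$. Levels $>i$ occurring between two consecutive elements of $T_i$ only help, since they also clear $[0,b_i]$. It then remains to check that the request triggering $s'$ was indeed released after $s$ and has deadline at $s'$, so that its $\OPT$ service falls in $(s,s']$.
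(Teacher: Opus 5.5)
Your proof is correct. The paper does not prove this statement itself; it cites Bienkowski et al. Its proof of Theorem~\ref{thm:randomized-upper-bound}, however, is the base-$e$, randomly shifted version of the standard argument, and with base $2$ and no shift it yields this theorem:
\begin{itemize}
\item each service is charged to an $\OPT$ service $(s,A)$ that covers its triggering request;
\item the services charged to $(s,A)$ have strictly increasing levels, all at most $H(A)$;
\item so the total charge is at most $\sum_{i\le H(A)}b_i=2b_{H(A)}<4A$.
\end{itemize}

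You organize the same accounting the other way around. You fix a level $i$ and show that the triggering requests of consecutive level-$i$ services have pairwise disjoint lifetimes, since each is released strictly after the previous level-$i$ service, which cleared $[0,b_i]$. From this you get $|T_i|\le N_i$, and then you exchange the order of summation.

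Both routes establish the same fact: an $\OPT$ service at location $y$ absorbs at most one \textsc{Doubling} service per level $i\le H(y)$. Your level-wise inequality is arguably more transparent. It only needs that two services of the same level never have their triggering requests covered by a common $\OPT$ service, rather than the strict monotonicity of levels that the paper proves. The paper's per-service chain form, on the other hand, is what carries over to its learning-augmented algorithms. There, service locations such as $2x$ or $\max\{a_{t_{\mathrm{cov}}(r)},(1+\lambda)x\}$ are not grid levels, so counting services by exact level is no longer available.

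There are two minor slips. First, $\sum_{i\le k}b_i=2b_k$ holds with equality, not strict inequality; this is harmless, since $2b_k<4y$. Second, your preliminary paragraph about consecutive services at level $\ge i$ asserts $x_{r'}\le b_i$, which fails when the later service has level above $i$. The exact-level version with $T_i$ that you then adopt avoids this.
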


\section{Randomized Algorithms for LAD}
\label{sec:randomized}
In this section, we first present a randomized online algorithm for LAD.
We then prove a matching lower bound for all randomized online algorithms, which shows that the competitive ratio achieved by our algorithm is optimal.

\subsection{A Randomized Algorithm for LAD}
\label{subsec:randomized-algorithm}

Our randomized online algorithm retains the main rule of \textsc{Doubling} but uses randomly shifted exponential service levels.
At the beginning, it samples $U$ uniformly from $[0,1)$.
For every $i\in\mathbb{Z}$, set $b_i=e^{i+U}$, and $H(x)=\min\{i\in\mathbb{Z}:b_i\geq x\}$.
Whenever one or more pending requests reach their deadlines, let $x$ denote the largest location among these requests.
The algorithm performs a service at location $b_{H(x)}$, as summarized in Algorithm~\ref{alg:randomized-lad}.
Since $b_{H(x)}\geq x$, this service serves every request reaching its deadline at that time.
Thus, Algorithm~\ref{alg:randomized-lad} always produces a feasible solution.

\begin{algorithm}[h]
\caption{An online randomized algorithm for LAD}
\label{alg:randomized-lad}
\begin{algorithmic}[1]
\State Sample $U$ uniformly from $[0,1)$.
\State Set $b_i\gets e^{i+U}$ for every $i\in\mathbb{Z}$.
\While{pending requests exist}
    \If{one or more pending requests reach their deadlines}
        \State Let $x$ denote the largest location among these requests, and perform a service at location $b_{H(x)}$.
    \EndIf
\EndWhile
\end{algorithmic}
\end{algorithm}

We next analyze the competitive ratio of the proposed algorithm.
\begin{theorem}
\label{thm:randomized-upper-bound}
Algorithm \ref{alg:randomized-lad} is $e$-competitive for LAD against an oblivious adversary.
\end{theorem}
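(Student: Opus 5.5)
The plan is a per-level charging argument in the style of the analysis of \textsc{Doubling} (Theorem~\ref{thm:doubling-benchmark}). First I fix $U$ and prove a deterministic bound relating the algorithm's cost to $\OPT$ through the rounded levels $b_{H(y)}$ of $\OPT$'s services. Then I average over $U$. Since the adversary is oblivious, the instance, and hence a fixed optimal solution $\OPT$, does not depend on $U$. It therefore suffices to bound $\E_U[\cost(\mathcal{A})]$ for a fixed instance.

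\textbf{Step 1 (counting services per level).} Fix $U$ and a level $i$. Let $s_1<s_2<\dots<s_m$ be the times at which the algorithm serves at location exactly $b_i$. The service at $s_j$ is triggered by a request $r_j=(\tau_j,x_j,s_j)$ with $b_{i-1}<x_j\le b_i$.
\begin{itemize}
\item For $j\ge 2$, $r_j$ was not pending at time $s_{j-1}$, since otherwise that service, which covers $[0,b_i]$, would have cleared it. Hence $\tau_j>s_{j-1}$; the convention that releases are revealed before deadlines are processed needs a little care at equal times.
\item $\OPT$ must serve $r_j$ at some time in $[\tau_j,s_j]\subseteq(s_{j-1},s_j]$, at a location $y\ge x_j>b_{i-1}$.
\end{itemize}
The intervals $(s_{j-1},s_j]$ are disjoint. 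So the number $m$ of level-$i$ services is at most the number of $\OPT$ services $(t,y)$ with $y>b_{i-1}$, which is equivalent to $H(y)\ge i$. The same argument applies to every level independently, and services at other levels do not interfere with it. Summing over levels and exchanging the order of summation gives
\[
\cost(\mathcal{A})\;\le\;\sum_{(t,y)\in\OPT}\ \sum_{i\le H(y)} b_i\;=\;\sum_{(t,y)\in\OPT} b_{H(y)}\cdot\frac{1}{1-e^{-1}}\;=\;\frac{e}{e-1}\sum_{(t,y)\in\OPT} b_{H(y)}.
\]

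\textbf{Step 2 (averaging over $U$).} For fixed $y>0$, the ratio $b_{H(y)}/y=e^{Z}$ where $Z=\lceil \ln y - U\rceil-(\ln y-U)$. Because $U$ is uniform on $[0,1)$, $Z$ is uniform on $[0,1)$, so
\[
\E\bigl[b_{H(y)}\bigr]=y\int_0^1 e^z\,dz=(e-1)\,y.
\]
Combining this with Step 1 and linearity of expectation gives $\E[\cost(\mathcal{A})]\le \frac{e}{e-1}(e-1)\cost(\OPT)=e\cdot\cost(\OPT)$.

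\textbf{Main obstacle.} The delicate point is Step 1: it must be airtight that each triggering request at level $i$ forces a distinct $\OPT$ service of location greater than $b_{i-1}$. This requires handling simultaneous events. When several requests hit their deadlines at once, only the largest location determines the level, and that is the request I charge. Boundary ties are also an issue: if $r_j$ is released exactly at $s_{j-1}$, the release-before-deadline convention means it is cleared by that service, so it is not a trigger for $s_j$. The remaining ingredients, the geometric sum and the uniform-shift computation, are routine.
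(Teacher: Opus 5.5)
Your proof is correct. It reaches the same intermediate inequality as the paper, $\cost(\mathcal{A})\le\sum_{(t,y)\in\OPT}\sum_{i\le H(y)}b_i$, and then uses the identical uniform-shift computation. The difference is only in how the double counting is organized.

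The paper fixes one charging map: each algorithmic service goes to an $\OPT$ service covering its determining request. It then shows that, for each fixed $\OPT$ service $(s,A)$, the levels of the services charged to it are strictly increasing and at most $H(A)$.

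You fix a level $i$ instead. You show that each level-$i$ trigger after the first was released strictly after the previous level-$i$ service. Hence the windows $[\tau_j,s_j]\subseteq(s_{j-1},s_j]$ are disjoint, and the level-$i$ services inject into $\OPT$ services with $H(y)\ge i$. Your handling of the release-before-deadline tie is correct.

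Both injectivity arguments rest on the same observation: a pending trigger at location at most $b_i$ cannot have been released before an earlier service at $b_i$. So your route is the transpose of the paper's count rather than a new idea.

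Each organization has an advantage. Your per-level version is the classical \textsc{Doubling}-style argument. It needs no global assignment, and any $\OPT$ service covering $r_j$ will do. The paper's per-$\OPT$-service chain is the structure it reuses in Theorems~\ref{thm:pd-doubling-tradeoff} and~\ref{thm:randomized-learning-tradeoff}. There, service locations such as $\max\{a_{t_{\mathrm{cov}}(r)},b_{H(x)}\}$ are no longer grid levels. The sharper robustness bound comes from bounding each non-final service in a chain by the next trigger's level, which a per-level count of service locations does not directly provide.
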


\begin{proof}
Fix an instance $I$ and an optimal offline solution $\OPT$.
Denote Algorithm~\ref{alg:randomized-lad} by $\mathcal{A}$.
Consider a service performed by $\mathcal{A}$ at time $t$, and choose a request at location $x$ that reaches its deadline at $t$ and determines the service location $b_{H(x)}$.
Charge the cost $b_{H(x)}$ incurred by this service to the first service in $\OPT$ that serves request $x$.
Thus, each service performed by $\mathcal{A}$ is charged exactly once.

Fix a service $(s,A)\in\OPT$, whose cost is $A$, and consider all services of $\mathcal{A}$ charged to it in chronological order.
Their indices are strictly increasing.
To see this, suppose that a service at time $t$ uses $b_i$ and a later service at time $t'>t$ uses $b_j$, where both are charged to $(s,A)$ and $j\leq i$.
Let $r'$ denote the request that determines the later service.
Since $(s,A)$ serves the request determining the earlier service, whose deadline is $t$, we have $s\leq t$.
Since $(s,A)$ also serves $r'$, the release time of $r'$ is at most $s$ and its location is at most $A$.
Hence $r'$ has already been released by time $t$.
Moreover, since $r'$ determines the service at $t'$, its location is at most $b_j\leq b_i$.
Therefore, the service at location $b_i$ performed at time $t$ would have served $r'$, contradicting that $r'$ remains pending until $t'$.
Thus, the indices of the services charged to $(s,A)$ are strictly increasing.

Every request assigned to $(s,A)$ has location at most $A$, so every service charged to $(s,A)$ has index at most $H(A)$.
Since no index occurs more than once, the total charge to $(s,A)$, for any fixed $U$, is at most
$$
\sum_{i\leq H(A)}b_i=\sum_{i\leq H(A)}e^{i+U}=\frac{e}{e-1}b_{H(A)}.
$$
It remains to take the expectation over $U$.
Except on a probability-zero event, $H(A)=\lceil\ln A-U\rceil$.
Since $U$ is uniform on $[0,1)$, the logarithmic excess $H(A)+U-\ln A$ is also uniform on $[0,1)$.
Consequently,
$$
\E[b_{H(A)}]=A\int_0^1 e^z\,dz=(e-1)A.
$$
The expected total charge to $(s,A)$ is therefore at most
$$
\frac{e}{e-1}\E[b_{H(A)}]=eA.
$$
Summing over all services in $\OPT$ and using linearity of expectation gives
$$
\E[\cost(\mathcal{A})]\leq\sum_{(s,A)\in\OPT}eA=e\,\cost(\OPT).
$$
Since this inequality holds for every instance fixed independently of the random choice $U$, $\mathcal{A}$ is $e$-competitive against an oblivious adversary.
\end{proof}

\begin{remark}
The choice of $e$ in Algorithm~\ref{alg:randomized-lad} is optimal.
For any $\beta>1$, the base $e$ in the service levels can be replaced by $\beta$.
The same analysis shows that the resulting randomized algorithm is $(\beta/\ln\beta)$-competitive. 
This ratio is minimized at $\beta=e$, giving a competitive ratio of $e$.
\end{remark}

\subsection{A Matching Randomized Lower Bound}
\label{subsec:randomized-lower-bound}
We now prove a matching lower bound for every randomized online algorithm for LAD.
\begin{theorem}\label{thm:lower-bound}
Every randomized online algorithm for LAD has competitive ratio at least $e$ against an oblivious adversary.
\end{theorem}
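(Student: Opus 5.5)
The plan is to apply Yao's minimax principle. I will exhibit a probability distribution over instances, fixed in advance and so admissible for an oblivious adversary, such that every deterministic online algorithm has expected cost at least $(e-o(1))$ times the expected optimal cost. The target is to make LAD simulate the \emph{online bidding} (or ``guess an unknown threshold'') game. In that game an algorithm commits to an increasing sequence of bids $y_1<y_2<\cdots$, pays $\sum_{k\le K} y_k$, where $K$ is the first index with $y_K\ge Z$, and the benchmark pays $Z$. For randomized algorithms against a hidden threshold $Z$ its optimal ratio is known to be exactly $e$, which matches the geometric-level structure of Algorithm~\ref{alg:randomized-lad} and the remark that $\beta/\ln\beta$ is minimized at $\beta=e$.

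For the construction I will fix a fine geometric grid $1=z_0<z_1<\cdots<z_m=M$ and draw the hidden threshold $Z=z_j$ from a truncated Pareto law with $\Prb[Z\ge z]\approx 1/z$ on $[1,M]$. The instance releases requests at the grid locations $z_0,z_1,\dots$, in increasing order of location, up to $Z$ only. Release times and deadlines will be arranged so that three properties hold. First, the windows of all requests that are actually issued share a common time, so $\OPT$ needs one service of cost $Z$. Second, the request at $z_{i+1}$ is revealed only after the algorithm has been forced to serve the currently expiring request at $z_i$, so at each forced service the algorithm does not know whether larger locations will appear. Third, a service at location $y$ clears exactly the issued requests at locations $\le y$, so the next forced service occurs if and only if $Z>y$.

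I intend to realize the third property by having each request at $z_{i+1}$ be released at or just before the deadline of the request at $z_i$, with nested deadlines, and to verify the first property by checking that the windows pairwise intersect. On intervals, pairwise intersection gives a common point. With these properties, any deterministic algorithm reduces to a nondecreasing bid sequence $y_1<y_2<\cdots$ that is fixed before $Z$ is revealed. Its cost is at least $\sum_{k\le K}y_k$, while $\OPT=Z$.

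The remaining step is analytic. For the Pareto distribution, $\E[\sum_{k\le K}y_k]=\sum_k y_k\Prb[Z>y_{k-1}]\approx\sum_k y_k/y_{k-1}$. Writing $y_k/y_{k-1}=e^{u_k}$, we have $\sum_k u_k\ge\ln M$, so this sum is at least $\ln M\cdot\min_{u>0} e^u/u = e\ln M$. On the other side, $\E[Z]\approx\ln M+O(1)$. Letting $M\to\infty$ and refining the grid then yields the ratio $e-o(1)$, and Yao's principle transfers the bound to randomized algorithms.

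The main obstacle is the timing construction. There is a direct tension: for $\OPT$ to pay only $Z$, all issued windows must share a common time, yet the algorithm must be forced to commit to a service location before learning whether larger requests exist. If every release precedes every deadline, a naive algorithm can simply wait until all requests are released. I will first try nested windows: the request at $z_i$ gets window $[1-\delta_i,\,1+\delta_i]$ with $\delta_i$ decreasing, and the request at $z_{i+1}$ is released at $1-\delta_{i+1}$, which is after the request at $z_i$ is already pending. This would be combined with an extra small forcing request near the origin that expires before each new release.

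If a single common time cannot be achieved, the fallback is to let $\OPT$ use one service of cost $Z$ plus a geometrically vanishing extra cost. Alternatively, the game can be repeated over many independent phases so that additive losses are amortized. Either way, it must then be checked that these extra terms are $o(\ln M)$.
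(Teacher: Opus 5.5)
\textbf{Gap: the hard instance.} Your analytic half is sound and matches the paper's per-phase computation, which uses $e^z\ge ez$: bidding against a threshold with $\Prb[Z\ge z]\approx 1/z$, the bound $\min_{u>0}e^u/u=e$, and Yao's principle. The gap is the instance construction, which is the real content of the theorem, and your primary plan provably fails.

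If the windows of all issued (non-forcing) requests contain a common time $t^*$, consider the first of their deadlines, $t_1=\min_j d_j\ge t^*$. At $t_1$ every issued request has already been released and is still pending. So the algorithm that serves forcing requests at their own tiny locations, and at $t_1$ serves once at the largest pending location, pays $Z+o(1)$, which is essentially $\OPT$. Your nested version shows this directly: every release $1-\delta_i$ precedes time $1$ and every deadline $1+\delta_i$ follows it. Forcing requests near the origin do not help, because a deadline forces only \emph{some} service, and the algorithm chooses its location.

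Your properties 2 and 3 also conflict. If $z_{i+1}$ is not yet released when the request at $z_i$ expires, a bid $y\ge z_{i+1}$ does not clear it, so there is no bid to make. If it is released at or before that deadline, the algorithm sees it before committing. So while the hidden information is \emph{which requests exist} and $\OPT$ is one service, no timing gives a ratio bounded away from $1$.

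\textbf{The missing idea.} You need to hide \emph{how long requests stay urgent}, not whether they exist. In the paper, phase $p$ releases all locations $x_i=e^{i\eps}$, $i=0,\dots,N$, together at $\sigma_p$, with deadlines $\sigma_p+(i+1)\Delta_p$. Then $\OPT$ covers $r_{p,0},\dots,r_{p,K_p}$ with one service of cost $x_{K_p}$ at time $\sigma_p+\Delta_p$. The threshold $K_p$, with $\Prb[K_p\ge i]=e^{-i\eps}$, is revealed only through the phase length $(K_p+3/2)\Delta_p$, that is, when the next phase's requests appear. At each forced deadline the algorithm does not know whether the phase continues, which is exactly your bidding game.

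The requests with $i>K_p$ are neutralized by shrinking time scales, $\Delta_{p+1}=\Delta_p/(4(N+2))$. Every later phase then ends before $d_{p,K_p+1}$, so a single final service at $x_N$ clears the leftovers of all phases. Its cost $e^{N\eps}$ is amortized by letting $M\to\infty$ first.

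This amortization is essential, not cosmetic. In a single phase the request at $x_N$ must still be served, so $\OPT\ge x_N$. Yet serving at $x_N$ at the first deadline already clears the whole phase, so the ratio is essentially $1$. Your fallback of repeating phases points in the right direction. Without a mechanism that lets all phases share one cleanup service, however, the reduction to bidding is not established.
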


By Yao's principle~\cite{yao1977probabilistic}, it suffices to construct a distribution under which every deterministic algorithm has a competitive ratio approaching $e$.
The hard instance $I$ is constructed as follows.
Fix $\eps\in(0,1]$ and integers $N\ge 0$ and $M\ge 1$.
Define $x_i=e^{i\eps}$, $i=0,\ldots,N$ and independently sample $K_1,\ldots,K_M$ from $\{0,\ldots,N\}$ whose tail probabilities satisfy $\Prb[K_p\ge i]=e^{-i\eps}$, $i=0,1,\ldots,N$.
For $p=1,\ldots,M$, let $\Delta_p$ denote $\bigl(4(N+2)\bigr)^{-(p-1)}$.
Set $\sigma_1=0$ and
\begin{equation}\label{eq:lower-release-times}
\sigma_{p+1} = \sigma_p+\left(K_p+\frac32\right)\Delta_p,\qquad p=1,\ldots,M.
\end{equation}
Thus, $I$ consists of $M$ consecutive time intervals $[\sigma_p,\sigma_{p+1}]$, with $p$-th interval having length $(K_p+3/2)\Delta_p$.
At time $\sigma_p$, release the requests
\begin{equation}\label{eq:lower-requests}
r_{p,i}=(\sigma_p,x_i,d_{p,i}), \qquad i=0,\ldots,N,
\end{equation}
where $d_{p,i} = \sigma_p+(i+1)\Delta_p$, $i=0,\ldots,N$.
The instance also contains the request $q=(\sigma_{M+1},x_0,\sigma_{M+1})$.
This request makes the value of $K_M$, which determines $\sigma_{M+1}$, part of the instance.
It can be served at its deadline since, by the convention in Section~\ref{subsec:model}, releases are revealed before deadlines at the same time.

We record the separation between the deadlines and the later release times.
For every $p=1,\ldots,M$,
\begin{align*}
0 &\le \sigma_{M+1}-\sigma_{p+1} = \sum_{j=p+1}^{M}\left(K_j+\frac32\right)\Delta_j \notag\\
&\le \left(N+\frac32\right) \sum_{j=p+1}^{\infty}\Delta_j = \frac{N+\frac32}{4(N+2)-1}\Delta_p < \frac14\Delta_p.
\end{align*}
Together with \eqref{eq:lower-release-times}, this implies
\begin{equation}\label{eq:lower-separation}
d_{p,K_p} < \sigma_{p+1} \le \sigma_{M+1}, \qquad \sigma_{M+1}<d_{p,K_p+1}\quad\text{if }K_p<N.
\end{equation}
Consequently, $r_{p,0},\ldots,r_{p,K_p}$ reach their deadlines before $\sigma_{p+1}$, whereas every request $r_{p,i}$ with $i>K_p$ satisfies
$$
d_{p,i}\ge d_{p,K_p+1}>\sigma_{M+1}.
$$
Equations \eqref{eq:lower-requests} define the distribution $\cD_{M,N,\eps}$.
All $K_p$ are sampled before the online algorithm starts, so $\cD_{M,N,\eps}$ is chosen by an oblivious adversary.
We next establish a lower bound on the expected cost incurred by any deterministic algorithm.

\begin{lemma}\label{lem:online-lower-bound}
For every deterministic online algorithm $\mathcal A$,
\begin{equation}\label{eq:online-lower-bound}
\E_{I\sim\cD_{M,N,\eps}}[\cost(\mathcal A)] \ge M e^{1-\eps}(N+1)\eps.
\end{equation}
\end{lemma}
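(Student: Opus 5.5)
Our plan is to decompose the cost of $\mathcal A$ into contributions from the $M$ phases and show that each phase contributes at least $e^{1-\eps}(N+1)\eps$ in expectation, conditionally on everything that happened before phase $p$.

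\textbf{Step 1: every service belongs to some phase.} Fix $p$ and condition on $K_1,\ldots,K_{p-1}$, which fixes $\sigma_p$ and the history before $\sigma_p$. Since $\mathcal A$ is deterministic and online, its behavior during $[\sigma_p,\sigma_{p+1})$ depends only on this history and on which deadlines inside phase $p$ have already passed. The requests $r_{p,i}$ are released at $\sigma_p$, so only services at times $\ge\sigma_p$ can serve them. By \eqref{eq:lower-separation}, requests $r_{p,0},\ldots,r_{p,K_p}$ expire before $\sigma_{p+1}$. Hence these requests must be served by services performed in $[\sigma_p,d_{p,K_p}]\subseteq[\sigma_p,\sigma_{p+1})$. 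These time windows are disjoint across $p$, so the phase costs add up.

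\textbf{Step 2: reduce a phase to a ski-rental-type game.} Inside phase $p$, the algorithm only learns $K_p$ through the absence of the next release. Until time $\sigma_{p+1}=\sigma_p+(K_p+3/2)\Delta_p$, which lies strictly between $d_{p,K_p}$ and $d_{p,K_p+1}$, the algorithm has seen exactly the same inputs whatever the value of $K_p$ is, as long as it is still in the first $K_p+1$ deadlines. Its strategy in phase $p$ is therefore a fixed deterministic plan. At deadline $d_{p,i}$ (for each $i$, provided the phase is still running), if $r_{p,i}$ is still pending, the plan must serve at some location $\ge x_i$. Let $j(i)$ denote the largest index served by the plan up to and including deadline $d_{p,i}$. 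The phase cost given $K_p=k$ is at least the sum of the plan's service costs up to $d_{p,k}$. It suffices to consider plans whose services are at locations $x_j$ (rounding down to the largest $x_j\le y$ does not change coverage of the remaining requests), and whose services form a strictly increasing sequence of indices $j_1<j_2<\cdots<j_m$, each triggered at deadline $d_{p,j_{\ell-1}+1}$.

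\textbf{Step 3: expected cost of a plan.} A service at $x_{j_\ell}$ triggered at deadline index $j_{\ell-1}+1$ is paid iff $K_p\ge j_{\ell-1}+1$, which has probability $e^{-(j_{\ell-1}+1)\eps}$. The expected cost is therefore
$$
\sum_\ell e^{j_\ell\eps}e^{-(j_{\ell-1}+1)\eps}=\sum_\ell e^{(j_\ell-j_{\ell-1}-1)\eps},
$$
with $j_0=-1$ and the plan ending at $j_m=N$. Write $g_\ell=j_\ell-j_{\ell-1}\ge1$, so that $\sum g_\ell=N+1$. By convexity, $e^{(g-1)\eps}\ge e^{1-\eps}\,g\eps$: the tangent line to $e^{(g-1)\eps}$ through the origin touches at $g\eps=1$, which gives exactly $e^{(g-1)\eps}\ge e\cdot e^{-\eps} g\eps$. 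Summing over $\ell$ gives at least $e^{1-\eps}(N+1)\eps$ per phase. Summing over $p$ gives \eqref{eq:online-lower-bound}.

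\textbf{Main obstacle.} The hard part is Step 2: we must rigorously justify that an arbitrary deterministic algorithm, which may serve at arbitrary times and arbitrary locations, possibly also between deadlines, and may carry pending requests from earlier phases, costs at least as much as such a canonical plan. We would handle this by charging only services within the phase window. Requests from earlier phases with $i>K_{p'}$ have deadlines beyond $\sigma_{M+1}$, so they never force anything. It then remains to argue, via a monotone replacement, that moving each service to the next forced deadline and rounding its location to some $x_j$ never increases the cost on any realization.
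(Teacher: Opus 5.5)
Your outline follows the paper's proof. It has the per-phase conditioning on the history before $\sigma_p$. It uses the observation that every value $K_p\ge a$ produces the same run up to $d_{p,a}<\sigma_{p+1}$. It partitions $\{0,\ldots,N\}$ into index blocks whose service is paid with probability $e^{-a\eps}$. And it closes with the tangent-line bound $e^{(g-1)\eps}\ge e^{1-\eps}g\eps$; your $g_\ell$ is the paper's $n_s$.

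What is missing is a proof of Step 2, and that is the only nontrivial part of the lemma. As written, the reduction to canonical plans is asserted, not shown. Your monotone-replacement idea can be made to work. Delaying a service only shrinks the set of values of $K_p$ for which it falls before $\sigma_{p+1}$, and rounding down or discarding services only lowers cost. But the replacement has to be carried out iteratively. Moving or deleting one service changes which requests are still pending, and hence changes the ``next forced deadline'' of the services after it.

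The paper does not transform the algorithm at all. In the run with $K_p=N$, it greedily selects actual services:
\begin{itemize}
\item set $a_1=0$;
\item the $s$-th selected service is the first service of the phase with location $\ell_s\ge x_{a_s}$;
\item set $b_s=\max\{i:x_i\le\ell_s\}$ and $a_{s+1}=b_s+1$.
\end{itemize}
Because $r_{p,a_s}$ must be served by $d_{p,a_s}$, this service exists and occurs no later than $d_{p,a_s}$. Hence it is performed, identically, in every realization with $K_p\ge a_s$. Consecutive selected services are distinct, since $\ell_s<x_{a_{s+1}}$. So, realization by realization, $C_p\ge\sum_{s:\,a_s\le K_p}\ell_s\ge\sum_{s:\,a_s\le K_p}x_{b_s}$. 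This is exactly the cost of your canonical plan with $j_s=b_s$. Replacing your Step 2 with this selection closes the argument, and your Step 3 then goes through unchanged.
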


\begin{proof}
For $p=1,\ldots,M$, let $C_p$ denote the sum of the costs of all services that $\mathcal A$ performs during $[\sigma_p,\sigma_{p+1})$.
These time intervals are pairwise disjoint, and hence
\begin{equation}\label{eq:lower-disjoint-cost}
\cost(\mathcal A)\ge\sum_{p=1}^M C_p.
\end{equation}

Fix $p$ and condition on the complete history before $\sigma_p$.
This fixes the state of $\mathcal A$ and all pending requests.
Every pending request released before $\sigma_p$ has a deadline after $\sigma_{M+1}$, and hence after $\sigma_{p+1}$, by \eqref{eq:lower-separation}.
Indeed, a request $r_{j,i}$ with $j<p$ and $i\le K_j$ reaches its deadline before $\sigma_{j+1}\le\sigma_p$ and therefore cannot still be pending.

Consider the execution corresponding to $K_p=N$, and order all services made during $[\sigma_p,\sigma_{p+1})$ chronologically, breaking ties by execution order.
We select some of these services and use them to partition the indices $\{0,\ldots,N\}$.
Set $a_1=0$.
Given $a_s$, define the $s$-th selected service as the first service at a location $\ell_s\ge x_{a_s}$, and define
$$
b_s=\max\{i\in\{0,\ldots,N\}:x_i\le\ell_s\}.
$$
If $b_s=N$, stop; otherwise, set $a_{s+1}=b_s+1$ and continue.
The selected service exists and occurs no later than $d_{p,a_s}$, since the request $r_{p,a_s}$ must be served by that deadline.
Therefore, the intervals of indices $[a_s,b_s]$ form a partition of $\{0,\ldots,N\}$.

If $K_p\ge a_s$, then
$$
\sigma_{p+1}=\sigma_p+\left(K_p+\frac32\right)\Delta_p>d_{p,a_s}.
$$
Up to the selected service, the requests revealed to $\mathcal A$ are thus identical to those in the execution with $K_p=N$.
Since $\mathcal A$ is deterministic, it performs the same service at $\ell_s$ whenever $K_p\ge a_s$.
It follows that, conditional on the history before $\sigma_p$,
\begin{align*}
\E[C_p\mid\text{history before }\sigma_p] &\ge \sum_s \Prb[K_p\ge a_s]\ell_s \\
&\ge \sum_s e^{-a_s\eps}x_{b_s} = \sum_s e^{(b_s-a_s)\eps}.
\end{align*}
Let $n_s$ denote $b_s-a_s+1$.
Since $e^z\ge ez$ for $z>0$,
$$
e^{(b_s-a_s)\eps}=e^{-\eps}e^{n_s\eps}\ge e^{1-\eps}n_s\eps.
$$
The index intervals form a partition, so $\sum_s n_s=N+1$.
Therefore,
\begin{equation}\label{eq:lower-conditional-cost}
\E[C_p\mid\text{history before }\sigma_p] \ge e^{1-\eps}(N+1)\eps.
\end{equation}
Previously released pending requests may affect the services chosen by $\mathcal A$, but they do not invalidate the argument: none reaches its deadline before $\sigma_{p+1}$, and every service cost is included in $C_p$.

Taking expectations in \eqref{eq:lower-conditional-cost}, summing over $p$, and applying \eqref{eq:lower-disjoint-cost} proves \eqref{eq:online-lower-bound}.
\end{proof}

\begin{lemma}\label{lem:offline-upper-bound}
For the distribution $\cD_{M,N,\eps}$,
\begin{equation}\label{eq:offline-upper-bound}
\E_{I\sim\cD_{M,N,\eps}}[\cost(\OPT)] \le M\bigl(1+N(1-e^{-\eps})\bigr)+e^{N\eps}.
\end{equation}
\end{lemma}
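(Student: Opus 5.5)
We bound $\cost(\OPT)$ by exhibiting, for every realization of $K_1,\ldots,K_M$, an explicit feasible offline solution and then taking expectations. The offline solution has two parts. The first part handles the requests that must be served inside their own phase. The second part is a single final service that clears everything still pending at the end.

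For each phase $p$, we perform one service at time $\sigma_p$ at location $x_{K_p}$. This service is feasible for the requests $r_{p,0},\ldots,r_{p,K_p}$: they are released at $\sigma_p$, their deadlines are at least $\sigma_p$, and their locations are at most $x_{K_p}$. At time $\sigma_{M+1}$ we perform one further service at location $x_N$. This service clears:
\begin{itemize}
\item the request $q$, which has location $x_0\le x_N$;
\item every request $r_{p,i}$ with $i>K_p$. Such a request is released at $\sigma_p\le\sigma_{M+1}$, and by \eqref{eq:lower-separation} its deadline satisfies $d_{p,i}\ge d_{p,K_p+1}>\sigma_{M+1}$.
\end{itemize}
Hence every request is served inside its window, and the solution is feasible. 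Its cost is $\sum_{p=1}^M x_{K_p}+x_N=\sum_{p=1}^M e^{K_p\eps}+e^{N\eps}$.

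It remains to compute $\E[e^{K_p\eps}]$ from the tail probabilities. We use summation by parts on the tails, where the $i=0$ term contributes $\Prb[K_p\ge 0]\cdot 1=1$:
$$
\E\bigl[e^{K\eps}\bigr]=1+\sum_{i=1}^N \Prb[K\ge i]\bigl(e^{i\eps}-e^{(i-1)\eps}\bigr)=1+\sum_{i=1}^N e^{-i\eps}e^{i\eps}(1-e^{-\eps}).
$$
The right-hand side equals $1+N(1-e^{-\eps})$.

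Summing over $p$ and adding the final term $e^{N\eps}$ gives exactly \eqref{eq:offline-upper-bound}. Since $\OPT$ is no more expensive than this particular solution, the bound holds for $\E[\cost(\OPT)]$ as well.

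The only delicate step is the feasibility of the final service, namely that no request $r_{p,i}$ with $i>K_p$ has a deadline before $\sigma_{M+1}$. This is exactly what the separation estimate \eqref{eq:lower-separation} provides.
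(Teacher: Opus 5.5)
Your proof is correct and follows the paper's argument: the same explicit solution (one service at $x_{K_p}$ per phase plus a final service at $x_N$ at time $\sigma_{M+1}$, justified by \eqref{eq:lower-separation}), and the same tail-sum computation of $\E[x_{K_p}]$. The only difference is that you place the phase-$p$ service at $\sigma_p$ instead of $d_{p,0}$; this is equally valid because every $r_{p,i}$ is released at $\sigma_p$ and has its deadline after $\sigma_p$.
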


\begin{proof}
Consider the following offline solution.
For every $p=1,\ldots,M$, it performs a service at location $x_{K_p}$ at time $d_{p,0}$.
All requests in \eqref{eq:lower-requests} have been released by that time, and this service serves $r_{p,0},\ldots,r_{p,K_p}$ no later than their deadlines.
At time $\sigma_{M+1}$, the solution performs one service at location $x_N$.
By \eqref{eq:lower-separation}, this service occurs before the deadlines of all remaining requests $r_{p,i}$ with $i>K_p$.
It also serves $q$.
The solution is therefore feasible and has cost
$$
\sum_{p=1}^M x_{K_p}+x_N.
$$
For each $p$, the tail-sum identity gives
\begin{align*}
\E[x_{K_p}] &= x_0+\sum_{i=1}^N(x_i-x_{i-1})\Prb[K_p\ge i] \\
&=1+\sum_{i=1}^N \left(e^{i\eps}-e^{(i-1)\eps}\right)e^{-i\eps} =1+N(1-e^{-\eps}).
\end{align*}
Since $x_N=e^{N\eps}$, the expected cost of this feasible solution is the right-hand side of \eqref{eq:offline-upper-bound}.
Since $\OPT$ has no larger cost than this feasible solution, \eqref{eq:offline-upper-bound} follows.
\end{proof}

\begin{proof}[Proof of Theorem~\ref{thm:lower-bound}]
Combining Lemmas~\ref{lem:online-lower-bound} and \ref{lem:offline-upper-bound}, every deterministic online algorithm $\mathcal A$ satisfies
\begin{equation}\label{eq:lower-finite-ratio}
\frac{\E_{I\sim\cD_{M,N,\eps}}[\cost(\mathcal A)]} {\E_{I\sim\cD_{M,N,\eps}}[\cost(\OPT)]} \ge \frac{M e^{1-\eps}(N+1)\eps} {M\bigl(1+N(1-e^{-\eps})\bigr)+e^{N\eps}}.
\end{equation}
For fixed $N$ and $\eps$, letting $M\to\infty$ removes the one-time cost $e^{N\eps}$ from the ratio.
Letting $N\to\infty$ next gives
$$
e^{1-\eps}\frac{\eps}{1-e^{-\eps}}.
$$
Finally, $1-e^{-\eps}\sim\eps$ as $\eps\to 0$, and hence
$$
\lim_{\eps\to 0}\lim_{N\to\infty}\lim_{M\to\infty}\frac{M e^{1-\eps}(N+1)\eps}{M\bigl(1+N(1-e^{-\eps})\bigr)+e^{N\eps}}=e.
$$
Thus, for every $\delta>0$, we may first choose $\eps>0$ sufficiently small, then $N$ sufficiently large, and finally $M$ sufficiently large so that the right-hand side of \eqref{eq:lower-finite-ratio} is at least $e-\delta$.
The resulting $\cD_{M,N,\eps}$ is a finite-support distribution that is independent of $\mathcal A$.

To apply Yao's principle, fix any randomized online algorithm $\mathcal A$ and condition on its random choices.
Each outcome gives a deterministic online algorithm, to which the preceding distributional inequality applies.
Averaging first over the random choices of $\mathcal A$ and then over $I\sim\cD_{M,N,\eps}$ preserves this inequality.
Since $\cD_{M,N,\eps}$ has finite support, some instance in its support satisfies
$$
\E[\cost(\mathcal A)]\ge(e-\delta)\cost(\OPT),
$$
where the expectation is over the random choices of $\mathcal A$.
Hence no randomized online algorithm against an oblivious adversary can have competitive ratio strictly smaller than $e$.
Together with Theorem~\ref{thm:randomized-upper-bound}, this establishes $e$ as the optimal randomized competitive ratio.
\end{proof}

\section{Learning-Augmented Deterministic Algorithms}
\label{sec:learning-augmented-deterministic}
We give a deterministic learning-augmented algorithm whose reliance on the advice is controlled by a confidence parameter $\lambda\in(0,1]$.
The algorithm performs a service only when a pending request reaches its deadline.
Suppose that $r=(\tau,x,d)$ is such a request.
By feasibility of the advice, $t_{\mathrm{cov}}(r)\leq d$, so the advised service location $a_{t_{\mathrm{cov}}(r)}$ is available when $d$ is processed.
If this location is at most $2x/\lambda$, the algorithm performs a service at location $\max\left\{a_{t_{\mathrm{cov}}(r)},(1+\lambda)x\right\}$.
Otherwise, it performs a service at location $2x$.
The complete procedure is stated in Algorithm~\ref{alg:pd-doubling}.
We next analyze the robustness and consistency of this algorithm.

\begin{algorithm}[h]
\caption{Learning-augmented deterministic algorithm for LAD}
\label{alg:pd-doubling}
\begin{algorithmic}[1]
\Require Confidence parameter $\lambda\in(0,1]$
\For{each release, advice, or deadline time $t$, in increasing order}
    \While{there exists a pending request $r=(\tau,x,d)$ with $d=t$}
        \If{$a_{t_{\mathrm{cov}}(r)}\leq 2x/\lambda$}
            \State $b\gets\max\{a_{t_{\mathrm{cov}}(r)},(1+\lambda)x\}$
        \Else
            \State $b\gets2x$
        \EndIf
        \State Perform a service at location $b$ at time $t$
    \EndWhile
\EndFor
\end{algorithmic}
\end{algorithm}

\begin{theorem}
\label{thm:pd-doubling-tradeoff}
For every $\lambda\in(0,1]$, Algorithm~\ref{alg:pd-doubling} is $(1+3/\lambda)$-robust and $(1+3\lambda)$-consistent.
\end{theorem}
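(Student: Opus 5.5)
The proof uses two charging arguments in the style of the proof of Theorem~\ref{thm:randomized-upper-bound}: one against $\OPT$ for robustness, one against $\ADV$ for consistency. Each service of Algorithm~\ref{alg:pd-doubling} is triggered by a pending request $r=(\tau,x,d)$ at $t=d$, called its \emph{determining} request. Call the service \emph{type A} if $a_{t_{\mathrm{cov}}(r)}\le 2x/\lambda$, so that $b=\max\{a_{t_{\mathrm{cov}}(r)},(1+\lambda)x\}$, and \emph{type B} otherwise, so that $b=2x$. Since $\lambda\le 1$, in both cases
\[
(1+\lambda)x\le b\le 2x/\lambda .
\]

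The structural fact used twice is the following. Suppose two services at times $t\le t'$, with $t'$ processed later, are charged to the same reference service $(s,A)$, and the reference service serves the determining request of the earlier one. Then $s\le t$, so the later determining request $r'=(\tau',x',d')$ satisfies $\tau'\le s\le t$. Hence $r'$ was pending at the earlier service. It survived that service, so $x'>b$ where $b$ is the earlier service location. Consequently the determining locations $x_1<x_2<\dots<x_m$ of services charged to one reference service satisfy
\[
x_{k+1}>b_k\ge(1+\lambda)x_k .
\]
This also covers several services in the same while-loop at one time.

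\textbf{Robustness.} Charge each service to the first $\OPT$ service $(s,A)$ that serves its determining request, so that $x_k\le A$ for all charged services. By the structural fact, $b_k<x_{k+1}$ for $k<m$. Therefore
\[
\sum_{k<m}b_k<\sum_{k=2}^{m}x_k\le A\sum_{j\ge0}(1+\lambda)^{-j}=A\,\frac{1+\lambda}{\lambda}.
\]
For the last service, $b_m\le 2x_m/\lambda\le 2A/\lambda$. The total charge is at most $A(1+1/\lambda+2/\lambda)=(1+3/\lambda)A$. Summing over $\OPT$ gives robustness $1+3/\lambda$.

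\textbf{Consistency.} Charge each service to the advice service $(t_0,a)$ with $t_0=t_{\mathrm{cov}}(r)$ for its determining request $r$. Then $x\le a$, and $t_0$ is at most the service time, so the structural fact applies with $s=t_0$.

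The key step, and the one I expect to need the most care, is showing that at most one type~A service is charged to $(t_0,a)$, and that it is the last one. A type~A service has $b\ge a$. Every later charged determining request was released by $t_0$ and has location at most $a$, so it was pending at that service and would have been cleared by it. Its cost is at most
\[
\max\{a,(1+\lambda)x\}\le(1+\lambda)a .
\]
Every other charged service is type~B, with $a>2x_k/\lambda$, that is, $x_k<\lambda a/2$, and cost $2x_k$. Consecutive type~B locations satisfy $x_{k+1}>b_k=2x_k$, so they grow by a factor greater than $2$. Their total cost is therefore at most
\[
2\cdot\frac{\lambda a}{2}\cdot 2=2\lambda a .
\]
Adding the two contributions gives $(1+3\lambda)a$ per advice service. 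Summing over $\ADV$ gives consistency $1+3\lambda$.
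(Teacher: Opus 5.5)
Your proposal is correct and follows essentially the same route as the paper: the same two charging schemes (to a covering $\OPT$ service for robustness, and to the first-coverage advice service for consistency), the same separation $x_{k+1}>b_k$ obtained from $\tau_{k+1}\le s\le d_k$, and the same argument that at most one first-branch service is charged to an advice service and that it must be the last one. Your resulting bounds $\frac{1+\lambda}{\lambda}A+\frac{2}{\lambda}A$ and $(1+\lambda)a+2\lambda a$ coincide with the paper's computation.
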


\begin{proof}
Let $\mathcal{A}$ denote Algorithm~\ref{alg:pd-doubling}.
We first establish feasibility.
Whenever a request $r=(\tau,x,d)$ triggers a service, the service location is at least $(1+\lambda)x>x$.
Thus, $r$ is served by its deadline.
Under the event convention in Section~\ref{subsec:model}, every request with deadline $t$ is either served by an earlier service or triggers a service at time $t$.

We next prove robustness by charging the algorithm's services to $\OPT$.
Since the instance is finite, redundant services can be removed from $\OPT$, so we may assume that $\OPT$ is finite.
For each service at location $b$ triggered by a request $r=(\tau,x,d)$, choose one service $(s,y)\in\OPT$ satisfying $\tau\leq s\leq d$ and $y\geq x$, and charge $b$ to it.
Such a service exists because $\OPT$ is feasible.

Fix $(s,y)\in\OPT$, and list the algorithmic services charged to it in execution order.
Let $r_i=(\tau_i,x_i,d_i)$ be the request triggering the $i$th such service, and let $b_i$ be its service location.
For every $i<m$, the service $(s,y)$ covers both $r_i$ and $r_{i+1}$, and hence
$$
\tau_{i+1}\leq s\leq d_i.
$$
Therefore, $r_{i+1}$ has been released when the service triggered by $r_i$ is performed.
Because $r_{i+1}$ later triggers another service, it is not served by the service at location $b_i$, which implies
$$
x_{i+1}>b_i\geq(1+\lambda)x_i.
$$
Consequently, $x_j<x_m/(1+\lambda)^{m-j}$ for every $j<m$.
Moreover, $x_m\leq y$ because $(s,y)$ covers $r_m$, and $b_m\leq2x_m/\lambda\leq2y/\lambda$.
The geometric bound also gives $\sum_{i=2}^m x_i<(1+\lambda)x_m/\lambda$, where the sum is empty if $m=1$.
The total cost charged to $(s,y)$ is
$$
\sum_{i=1}^m b_i\leq\sum_{i=2}^m x_i+\frac{2y}{\lambda}<\left(\frac{1+\lambda}{\lambda}+\frac{2}{\lambda}\right)y
=\left(1+\frac{3}{\lambda}\right)y.
$$
Every service performed by the algorithm is charged to exactly one service in $\OPT$.
Summing over all services in $\OPT$ gives
$$
\cost(\mathcal{A})\leq\left(1+\frac{3}{\lambda}\right)\cost(\OPT),
$$
which proves robustness.

We now prove consistency by charging each algorithmic service to the first advice service that covers its triggering request.
Fix an advice service $(s,a_s)\in\ADV$, and list in execution order the algorithmic services whose triggering requests $r_i=(\tau_i,x_i,d_i)$ satisfy $t_{\mathrm{cov}}(r_i)=s$.
Let $b_i$ be the location of the $i$th such service.
For every $i$, the definition of first coverage gives $\tau_i\leq s\leq d_i$ and $x_i\leq a_s$.
For $i<m$, request $r_{i+1}$ has therefore been released when the service triggered by $r_i$ is performed.
Because $r_{i+1}$ later triggers a service, it is not served by the service at location $b_i$, so $x_{i+1}>b_i$.

If a nonfinal service in this group took the first branch of Algorithm~\ref{alg:pd-doubling}, its location would be at least $a_s$.
All requests whose first coverage time is $s$ have been released by time $s\leq d_i$ and have locations at most $a_s$.
Such a service would therefore serve every remaining request in the group, contradicting the existence of a later trigger.
Hence, at most one service in the group takes the first branch, and if it exists, it is the final service.

Every other service in the group is performed at location $2x_i$ and satisfies
$$
a_s>\frac{2x_i}{\lambda},
\qquad
b_i=2x_i<\lambda a_s.
$$
For two consecutive such services, $x_{i+1}>b_i$ implies $b_{i+1}=2x_{i+1}>2b_i$.
Thus, their service locations form a sequence that grows by a factor greater than $2$, and their total cost is less than $2\lambda a_s$.
If the final service takes the first branch, then $x_m\leq a_s$ and its location is at most
$$
\max\{a_s,(1+\lambda)x_m\}\leq(1+\lambda)a_s.
$$
The total cost charged to $(s,a_s)$ is therefore less than $(1+3\lambda)a_s$.

Every service performed by the algorithm is charged to exactly one advice service because the advice is feasible.
Summing over all services in $\ADV$ gives
$$
\cost(\mathcal{A})\leq(1+3\lambda)\cost(\ADV),
$$
which proves consistency.
\end{proof}

\section{Learning-Augmented Randomized Algorithms}
\label{sec:learning-augmented-randomized}
We now combine the advice with the randomly shifted service levels from Section~\ref{subsec:randomized-algorithm}.
We sample the random offset $U$ once and use the same value throughout the algorithm.
Suppose that a pending request $r=(\tau,x,d)$ reaches its deadline.
By feasibility of the advice, $a_{t_{\mathrm{cov}}(r)}$ is available when $d$ is processed.
If this advised service location is at most $ex/\lambda$, the algorithm performs a service at location $\max\left\{a_{t_{\mathrm{cov}}(r)},b_{H(x)}\right\}$.
Otherwise, it performs a service at location $b_{H(x)}$.
The complete procedure is stated in Algorithm~\ref{alg:learning-randomized}.
We first establish the following identity for expectations, which will be used later in the proof.

\begin{algorithm}[h]
\caption{Learning-augmented randomized algorithm for LAD}
\label{alg:learning-randomized}
\begin{algorithmic}[1]
\Require Confidence parameter $\lambda\in(0,1]$
\State Sample $U$ uniformly from $[0,1)$
\State Set $b_i\gets e^{i+U}$ for every $i\in\mathbb{Z}$
\For{each release, advice, or deadline time $t$, in increasing order}
    \While{there exists a pending request $r=(\tau,x,d)$ with $d=t$}
        \If{$a_{t_{\mathrm{cov}}(r)}\leq ex/\lambda$}
            \State $b\gets\max\{a_{t_{\mathrm{cov}}(r)},b_{H(x)}\}$
        \Else
            \State $b\gets b_{H(x)}$
        \EndIf
        \State Perform a service at location $b$ at time $t$
    \EndWhile
\EndFor
\end{algorithmic}
\end{algorithm}

\begin{lemma}
\label{lem:random-grid-identities}
For every fixed $z>0$,
$$
\mathbb{E}[b_{H(z)}]=(e-1)z,
\qquad
\mathbb{E}\left[\sum_{i\leq H(z)}b_i\right]=ez,
$$
and
$$
\mathbb{E}\left[\sum_{b_i<z}b_i\right]=z.
$$
\end{lemma}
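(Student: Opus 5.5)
The plan is to reduce all three identities to one distributional fact: the position of $z$ relative to the random grid. Write $b_{H(z)} = z\,e^{W}$ with
$$
W = H(z)+U-\ln z .
$$
As in the proof of Theorem~\ref{thm:randomized-upper-bound}, outside the probability-zero event $\ln z - U\in\mathbb{Z}$ we have $H(z)=\lceil \ln z - U\rceil$. Hence $W$ is the fractional excess of $\lceil \ln z-U\rceil$ over $\ln z-U$. I would verify that $W$ is uniform on $[0,1)$. The map $U\mapsto (\lceil \ln z - U\rceil + U - \ln z)$ is, modulo $1$, a translation of $U$ by the constant $-\ln z$. Since translation preserves the uniform distribution on the circle $\mathbb{R}/\mathbb{Z}$, $W$ is uniform on $[0,1)$, up to a null set at the endpoints.

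Given this, the first identity is immediate:
$$
\E[b_{H(z)}]=z\int_0^1 e^{w}\,dw=(e-1)z .
$$
For the second identity, I would first evaluate the geometric series for each fixed $U$:
$$
\sum_{i\le H(z)} b_i = b_{H(z)}\sum_{k\ge 0} e^{-k} = \frac{e}{e-1}\,b_{H(z)} .
$$
Taking expectations and using the first identity then gives $ez$. This is exactly the computation already carried out inside the proof of Theorem~\ref{thm:randomized-upper-bound}, so I would cite that step.

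For the third identity, note that almost surely no $b_i$ equals $z$. On that event, the grid points strictly below $z$ are exactly those with $i\le H(z)-1$, because $H(z)$ is the least index with $b_i\ge z$ and $b_{H(z)}>z$ almost surely. Therefore
$$
\sum_{b_i<z} b_i=\sum_{i\le H(z)-1} b_i = \frac{1}{e-1}\,b_{H(z)},
$$
using $\sum_{k\ge1}e^{-k}=1/(e-1)$. Its expectation is $\frac{1}{e-1}(e-1)z=z$.

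The only step needing care is the uniformity of $W$ together with the handling of the measure-zero ties $b_i=z$. These ties affect both the formula $H(z)=\lceil\ln z-U\rceil$ and the strict inequality in the third sum. I would dispose of them by noting that $\{U: \ln z - U\in\mathbb{Z}\}$ is a single point of $[0,1)$ and so has probability zero. After that, everything is a direct calculation, and convergence of the series is clear since $b_i\to 0$ geometrically as $i\to-\infty$.
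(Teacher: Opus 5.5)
Your proposal is correct and follows essentially the same route as the paper: uniformity of the logarithmic excess $\ln b_{H(z)}-\ln z$ on $[0,1)$ gives the first identity, and the per-$U$ geometric-series formulas $\sum_{i\le H(z)}b_i=\frac{e}{e-1}b_{H(z)}$ and $\sum_{b_i<z}b_i=\frac{1}{e-1}b_{H(z)}$ give the other two after taking expectations. Your translation-mod-$1$ argument supplies the uniformity step that the paper only asserts. The null-set caveats are not actually needed, since $H(z)=\lceil \ln z-U\rceil$ and $\{i:b_i<z\}=\{i:i<H(z)\}$ hold for every $U$ by the definition of $H$ and the monotonicity of $b_i$.
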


\begin{proof}
Except on a probability-zero event, $\ln b_{H(z)}-\ln z$ is uniform on $[0,1)$.
Consequently,
$$
\mathbb{E}[b_{H(z)}]
=z\int_0^1 e^v\,dv
=(e-1)z.
$$
For every fixed offset $U$, we have
$$
\sum_{i\leq H(z)}b_i=\frac{e}{e-1}b_{H(z)}
$$
and, except on the same probability-zero event,
$$
\sum_{b_i<z}b_i=\frac{1}{e-1}b_{H(z)}.
$$
Taking expectations proves the remaining identities.
\end{proof}

\begin{theorem}
\label{thm:randomized-learning-tradeoff}
For every $\lambda\in(0,1]$, Algorithm~\ref{alg:learning-randomized} is $(e+e/\lambda)$-robust and $(e-1+\lambda)$-consistent against an oblivious adversary.
\end{theorem}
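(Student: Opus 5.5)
Both bounds will be proved by charging arguments that combine Theorem~\ref{thm:pd-doubling-tradeoff} with the proof of Theorem~\ref{thm:randomized-upper-bound}. Expectations over $U$ will be evaluated with Lemma~\ref{lem:random-grid-identities}. Feasibility is immediate, since every triggered service has location at least $b_{H(x)}\ge x$.

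\textbf{Robustness.} Fix a finite $\OPT$. Charge each algorithmic service, triggered by $r_i=(\tau_i,x_i,d_i)$, to one service $(s,y)\in\OPT$ that covers $r_i$. List the services charged to $(s,y)$ in execution order. As in the earlier proofs, $\tau_{i+1}\le s\le d_i$, so $r_{i+1}$ is already released when service $i$ is performed and is not cleared by it. Hence $x_{i+1}>b_i\ge b_{H(x_i)}$, which gives $H(x_{i+1})>H(x_i)$: the grid indices strictly increase and are all at most $H(y)$. Split each service cost as
\[
b_i\le b_{H(x_i)}+a_{t_{\mathrm{cov}}(r_i)}\mathbf 1[\text{first branch}].
\]
The grid parts sum to at most $\sum_{j\le H(y)}b_j$, whose expectation is $ey$. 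For the advice parts, suppose a nonfinal service $i<m$ takes the first branch at location $b_i\ge a_{t_{\mathrm{cov}}(r_i)}$. Then $x_{i+1}>b_i$, so each such advice term is strictly less than the next trigger location. The advice terms therefore increase along the sequence, and each is at most $ex_i/\lambda\le ey/\lambda$. The difficulty is that a crude sum of these terms is too large. The last first-branch advice term contributes at most $ey/\lambda$. Earlier advice terms $a^{(i)}<x_{i+1}\le b_{H(x_{i+1})}$ could be absorbed into grid values, but those grid values are already counted.

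\textbf{Main obstacle.} I expect the key step to be this refinement of the robustness accounting. I would charge only the grid excess $b_i-\max(\cdot)$ and write
\[
\sum_i b_i\le \sum_{i<m} b_i + b_m ,
\]
then use $b_i<x_{i+1}$ for $i<m$. I would then compare $\sum_{i<m}x_{i+1}$ with the distinct grid levels below $y$. Since $x_{i+1}\le b_{H(x_{i+1})}$ with strictly increasing indices, $\sum_{i<m}b_i\le\sum_{j\le H(y)}b_j$, whose expectation is $ey$. For the final service, $b_m\le\max(ex_m/\lambda,\,b_{H(x_m)})\le ey/\lambda+b_{H(y)}$. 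This bound would exceed the target by $b_{H(y)}$. To avoid that, I would instead bound $\sum_{i<m}b_i$ using only levels $j<H(y)$ when $b_m$ uses the grid, or use $\sum_{b_j<y}b_j$ (expectation $y$) together with $b_{H(y)}$ (expectation $(e-1)y$) from Lemma~\ref{lem:random-grid-identities}. The total is then at most $ey+ey/\lambda$ in expectation, as required.

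\textbf{Consistency.} Charge each service to the first advice service $(s,a_s)$ covering its trigger, exactly as in Theorem~\ref{thm:pd-doubling-tradeoff}. At most one service in a group takes the first branch, and it is the final one, since its location is at least $a_s$ and would clear the rest of the group. Every other service has $a_s>ex_i/\lambda$ and location $b_{H(x_i)}$, with strictly increasing indices. All of these indices satisfy $b_{H(x_i)}<$ (next trigger location) $\le a_s$, and also $x_i<\lambda a_s/e$. Their expected total is at most
\[
\E\Bigl[\sum_{b_j<\lambda a_s/e}b_j\Bigr]+\E[\text{boundary level}] ,
\]
which I expect to bound by $\lambda a_s$ using the third identity of Lemma~\ref{lem:random-grid-identities}, with care at the top level. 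The final service costs at most $\max(a_s,b_{H(x_m)})$. When it takes the first branch this is at most $a_s$ plus an excess term that the grid sum must absorb. When it takes the second branch, its grid level is one more distinct level $\le H(a_s)$. Summing everything is designed to give expected charge at most $(e-1+\lambda)a_s$: $(e-1)a_s$ from $\E[b_{H(a_s)}]$, covering the top level, and $\lambda a_s$ from the levels below the threshold.
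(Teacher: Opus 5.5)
Your plan has the same skeleton as the paper's proof: charge to a covering $\OPT$ service (or to the first covering advice service), bound each non-final cost by the next trigger's grid level so that $\sum_{i<m}b_i\le\sum_{j\le H(y)}b_j$, and treat the final service separately. However, the robustness argument is not closed. You bound the final service by $\max(ex_m/\lambda,b_{H(x_m)})\le ey/\lambda+b_{H(y)}$ and conclude that you overshoot by $b_{H(y)}$. The repairs you then propose do not work.
\begin{itemize}
\item Bounding $\sum_{i<m}b_i$ by the levels $j<H(y)$ when the final service uses the grid is false in general. Take $m=2$, $x_1=b_{h-1}$, an advised location $0.9\,b_h\le ex_1/\lambda$ for $r_1$, and $x_2=y=0.95\,b_h$ served at $b_h$ by the second branch. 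Then $b_1=0.9\,b_h>\sum_{j<h}b_j=b_h/(e-1)$.
\item Rewriting $\sum_{j\le H(y)}b_j$ as $\sum_{b_j<y}b_j+b_{H(y)}$ gives the same quantity with the same expectation $ey$, so it removes nothing.
\end{itemize}
As written, the total $ey+ey/\lambda$ is asserted rather than derived.

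The missing observation is that there is no overshoot. Since $b_{H(x)-1}<x$, the level $b_{H(x)}$ always satisfies $b_{H(x)}<ex$. Because $\lambda\le 1$, this gives $b_{H(x_m)}<ex_m\le ex_m/\lambda$. Hence in either branch the final service costs at most $ex_m/\lambda\le ey/\lambda$ pointwise; this is exactly why the threshold is $ex/\lambda$. Together with $\E\bigl[\sum_{j\le H(y)}b_j\bigr]=ey$, robustness follows at once, as in the paper.

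Your consistency argument is essentially the paper's and goes through once tightened:
\begin{itemize}
\item The second-branch levels are distinct and lie below $\lambda a_s$. Your version, via $z=\lambda a_s/e$ plus the boundary level, gives the same value $\lambda a_s/e+(e-1)\lambda a_s/e=\lambda a_s$.
\item A first-branch final service costs $\max(a_s,b_{H(x_m)})\le b_{H(a_s)}$ pointwise, because $x_m\le a_s$. So nothing needs to be ``absorbed'' by the grid sum, and this service contributes $(e-1)a_s$ in expectation.
\end{itemize}
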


\begin{proof}
Let $\mathcal{A}$ denote Algorithm~\ref{alg:learning-randomized}.
We first establish feasibility.
Every service triggered by a request $r=(\tau,x,d)$ is performed at a location at least $b_{H(x)}\geq x$.
Thus, $r$ is served by its deadline.
Under the event convention in Section~\ref{subsec:model}, every request with deadline $t$ is either served by an earlier service or triggers a service at time $t$.

We next prove robustness by charging the algorithm's services to $\OPT$.
Since the instance is finite, redundant services can be removed from $\OPT$, so we may assume that $\OPT$ is finite.
For each service triggered by a request $r=(\tau,x,d)$, choose one service $(s,y)\in\OPT$ satisfying $\tau\leq s\leq d$ and $y\geq x$, and charge the algorithmic service to it.
Such a service exists because $\OPT$ is feasible.

Fix $(s,y)\in\OPT$, and list the algorithmic services charged to it in execution order.
Let $r_j=(\tau_j,x_j,d_j)$ be the request triggering the $j$th such service, and let $c_j$ be its service location.
For every $j<m$, the service $(s,y)$ covers both $r_j$ and $r_{j+1}$, so
$$
\tau_{j+1}\leq s\leq d_j.
$$
Therefore, $r_{j+1}$ has been released when the service triggered by $r_j$ is performed.
Because $r_{j+1}$ later triggers another service, it is not served by the service at location $c_j$, which implies
$$
x_{j+1}>c_j\geq b_{H(x_j)}.
$$
Hence the indices $H(x_1),\ldots,H(x_m)$ are strictly increasing.
Moreover, $x_j\leq y$ for every $j$, and every non-final service satisfies $c_j<x_{j+1}\leq b_{H(x_{j+1})}$.
It follows that
$$
\sum_{j=1}^{m-1}c_j
<\sum_{j=2}^m b_{H(x_j)}
\leq\sum_{i\leq H(y)}b_i.
$$

The final service satisfies $c_m\leq ey/\lambda$.
Indeed, if it takes the first branch, then both $a_{t_{\mathrm{cov}}(r_m)}\leq ex_m/\lambda$ and $b_{H(x_m)}<ex_m\leq ex_m/\lambda$.
If it takes the second branch, then $c_m=b_{H(x_m)}<ex_m\leq ex_m/\lambda$.
In both cases, $x_m\leq y$ gives the claimed bound.
Lemma~\ref{lem:random-grid-identities} therefore yields
$$
\mathbb{E}\left[\sum_{j=1}^m c_j\right]
\leq\left(e+\frac{e}{\lambda}\right)y.
$$
Every service performed by the algorithm is charged to exactly one service in $\OPT$.
Summing over all services in $\OPT$ gives
$$
\mathbb{E}[\cost(\mathcal{A})]
\leq\left(e+\frac{e}{\lambda}\right)\cost(\OPT),
$$
which proves robustness against an oblivious adversary.

We now prove consistency by charging each algorithmic service to the first advice service that covers its triggering request.
Fix an advice service $(s,a_s)\in\ADV$, and list in execution order the algorithmic services whose triggering requests $r_j=(\tau_j,x_j,d_j)$ satisfy $t_{\mathrm{cov}}(r_j)=s$.
Let $c_j$ be the location of the $j$th such service.
For every $j$, the definition of first coverage gives $\tau_j\leq s\leq d_j$ and $x_j\leq a_s$.
For $j<m$, request $r_{j+1}$ has therefore been released when the service triggered by $r_j$ is performed.
Because $r_{j+1}$ later triggers a service, it is not served by the service at location $c_j$, so $x_{j+1}>c_j$.

If a non-final service in this group took the first branch of Algorithm~\ref{alg:learning-randomized}, its location would be at least $a_s$.
All requests whose first coverage time is $s$ have been released by time $s\leq d_j$ and have locations at most $a_s$.
Such a service would therefore serve every remaining request in the group, contradicting the existence of a later trigger.
Hence, at most one service in the group takes the first branch, and if it exists, it is the final service.

Every other service in the group is performed at location $b_{H(x_j)}$ and satisfies
$$
a_s>\frac{ex_j}{\lambda},
\qquad
c_j=b_{H(x_j)}<ex_j<\lambda a_s.
$$
The separation $x_{j+1}>c_j$ shows that these service locations are distinct increasing grid levels.
Their total is therefore at most the sum of all grid levels strictly below $\lambda a_s$.
By Lemma~\ref{lem:random-grid-identities}, their expected total cost is at most $\lambda a_s$.

If the final service takes the first branch, then its location is at most $b_{H(a_s)}$.
Indeed, if $b_{H(x_m)}\leq a_s$, its location is $a_s\leq b_{H(a_s)}$.
Otherwise, no grid level lies between $x_m$ and $a_s$, so $H(x_m)=H(a_s)$.
Lemma~\ref{lem:random-grid-identities} therefore bounds the expected cost of this final service by $(e-1)a_s$.
The same bound holds trivially when no service takes the first branch.
Thus, the expected total cost charged to $(s,a_s)$ is at most $(e-1+\lambda)a_s$.

Every service performed by the algorithm is charged to exactly one advice service because the advice is feasible.
Summing over all services in $\ADV$ gives
$$
\mathbb{E}[\cost(\mathcal{A})]
\leq(e-1+\lambda)\cost(\ADV),
$$
which proves consistency.
\end{proof}

\section{Polynomial-Time Offline Algorithm}
\label{sec:offlineoptimal}
We show that offline LAD is exactly solvable in polynomial time in this section.
We first establish the following properties of optimal solutions to offline LAD.
These properties guide the construction of the states in our dynamic program.
Throughout this section, write $a_r$, $x_r$, and $d_r$ for the release time, location, and deadline of request $r$, respectively.

\begin{lemma}
\label{lem:canonical}
For every finite instance $J$, there exists an optimal solution $S^\star$ and an assignment of each request to one service that covers it such that every service $s=(t_s,y_s)\in S^\star$ satisfies
\begin{enumerate}
 \item at least one request is assigned to $s$;
 \item $y_s$ is the largest location among the requests assigned to $s$;
 \item $t_s$ is the smallest deadline among the requests assigned to $s$.
\end{enumerate}
Moreover, $S^\star$ contains at most one service at each deadline.
In particular, every service in $S^\star$ occurs at the deadline of a request in $J$, and $\lvert S^\star\rvert \leq \lvert J\rvert$.
\end{lemma}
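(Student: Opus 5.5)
Starting from an arbitrary optimal solution, I will apply a sequence of cost-nonincreasing normalizations that preserve feasibility, and then argue that the process terminates. An optimal solution exists: any feasible solution can be pruned to one whose services are finitely many, each at a request location, and each at a time in the finite set of release times and deadlines. Among optimal solutions I pick $S$ together with an assignment of requests to covering services so that $\lvert S\rvert$ is minimal; if several qualify, I take one minimizing a secondary potential described below.

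First, assign each request to some service that covers it (one exists by feasibility). If some service receives no request, delete it. The result is still feasible, the cost does not increase, and $\lvert S\rvert$ drops, contradicting minimality; this gives property~1. Next, for each service $s$ let $x^{\max}_s$ be the largest location among its assigned requests. Lowering $y_s$ to $x^{\max}_s$ keeps every assigned request covered and does not increase cost, so optimality forces $y_s=x^{\max}_s$; this gives property~2. For property~3, let $\delta_s$ be the smallest deadline among the assigned requests. Every assigned request $r$ satisfies $a_r\le t_s\le d_r$, so $t_s\le\delta_s$. Moving the service to time $\delta_s$ keeps it inside every assigned window, since $a_r\le t_s\le\delta_s\le d_r$. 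The cost is unchanged, so I perform the move. The secondary potential is $\sum_s t_s$, to be maximized. Delaying services only increases it, and it takes finitely many values because all times lie among the deadlines, so these moves terminate.

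The main obstacle is the merging step: showing that at most one service occurs at each deadline. Suppose two services $(t,y)$ and $(t,y')$ occur at the same time with $y\le y'$. Reassign all requests of the first service to the second, which covers them because it has the same time and $y'\ge y$. The first service then becomes empty and is deleted, strictly decreasing $\lvert S\rvert$ and contradicting minimality. The delicate point is that reassignments can break properties 2 and 3 for the receiving service. I will handle this by ordering the argument lexicographically: first minimize $\lvert S\rvert$ among optimal solutions, then maximize $\sum_s t_s$. Within that class, deletion and merging are impossible by minimality of $\lvert S\rvert$. Re-lowering locations to the maximum assigned location never increases cost, so by optimality it changes nothing. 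Delaying services to their minimum assigned deadline is excluded by maximality of $\sum_s t_s$ unless $t_s$ already equals that deadline. Hence all three properties and the one-service-per-deadline condition hold simultaneously. Finally, every $t_s$ is a deadline of some request by property~3. Distinct services occur at distinct deadlines, so $\lvert S^\star\rvert$ is at most the number of distinct deadlines, which is at most $\lvert J\rvert$.
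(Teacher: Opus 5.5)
Your proof is correct in substance, but it takes a different route from the paper's. The paper argues constructively. Starting from an arbitrary feasible solution, it deletes unassigned services, lowers each location to the largest assigned location, and shifts each service to the smallest assigned deadline. Only then does it merge services that share a time. At that point every service at a common time $t$ already has $t$ as the smallest deadline of its own group, so the merged service (with the larger location) automatically satisfies properties 2 and 3. The ``delicate point'' you worry about therefore never arises. Existence of an optimum then follows because canonical solutions form a finite family into which every feasible solution maps without increasing cost.

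You instead fix an optimum that is lexicographically extremal (minimum $\lvert S\rvert$, then maximum $\sum_s t_s$). You show that any violation of a property would produce a competitor contradicting extremality, so no operation is ever performed and no invariant needs re-checking. The paper's route gives more: a cost-nonincreasing normalization of every feasible solution, not just existence of one canonical optimum. Yours gives a cleaner contradiction-style verification, but it requires that the extremum exists.

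That existence step is the one you should tighten. Over all optimal solutions of minimum cardinality, service times can be arbitrary reals inside the request windows. So the claim that $\sum_s t_s$ ``takes finitely many values because all times lie among the deadlines'' is only true after the fact. It is easily repaired in either of two ways.
\begin{itemize}
\item Make the lexicographic choice inside the finite family from your first paragraph (times among release times and deadlines, locations among request locations). That family contains a global optimum and is closed under your deletion, lowering, delay and merge moves.
\item Alternatively, note that $\sum_s t_s\le\sum_s\delta_s$. The right-hand side depends only on the finitely many possible groupings of $J$, and delaying every service to its $\delta_s$ attains this bound without losing optimality or cardinality.
\end{itemize}
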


\begin{proof}
Begin with any feasible solution and assign each request to one service that covers it.
Delete every service with no assigned request.
For a remaining service $s$, let $R(s)$ be its nonempty set of assigned requests.
Replace its location by $y'_s=\max_{r\in R(s)}x_r$.
The original service covers every request in $R(s)$, so $y'_s\le y_s$.
This replacement preserves feasibility and does not increase the cost.

Next, move the service to $t'_s=\min_{r\in R(s)}d_r$.
For every $r\in R(s)$, feasibility of the original service gives $a_r\le t_s\le d_r$.
Consequently, $t_s\le t'_s\le d_r$, and hence $a_r\le t'_s\le d_r$.
The shifted service therefore still covers every request assigned to it.

After applying these operations to all services, merge all services occurring at the same time into one service whose location is the maximum of their locations.
Assign to the merged service the union of their assigned request sets.
Its location is the largest location in this union, and its time is the smallest deadline in the union.
The merge preserves feasibility and does not increase the total cost.

Thus every feasible solution can be transformed, without increasing its cost, into a solution satisfying the stated properties.
Every service in such a solution has a time in the finite set $\{d_r:r\in J\}$ and a location in the finite set $\{x_r:r\in J\}$.
There are therefore only finitely many such solutions.
At least one is feasible, for example the solution that serves each request at its own deadline.
A minimum-cost solution among this finite family exists, and the preceding transformation shows that it is also optimal among all feasible solutions.
\end{proof}

We call a solution satisfying Lemma~\ref{lem:canonical} \emph{deadline-canonical}.
The next lemma shows how an optimal solution can be decomposed into two smaller subproblems, which leads to the recurrence used in our dynamic program.

\begin{lemma}
\label{lem:farthest-separation}
Let $J$ be an instance and $p\in\arg\max_{r\in J}x_r$.
There exists a time $t\in\{d_r:r\in J\}\cap[a_p,d_p]$ such that
\begin{equation}
\label{eq:farthest-separation}
\OPT(J)=x_p+\OPT\bigl(J^-(t)\bigr)+\OPT\bigl(J^+(t)\bigr).
\end{equation}
where $J^-(t)=\{r\in J:d_r<t\}$ and $J^+(t)=\{r\in J:a_r>t\}$.
\end{lemma}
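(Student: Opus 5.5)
We prove \eqref{eq:farthest-separation} as two inequalities. For ``$\ge$'' we use a deadline-canonical optimal solution from Lemma~\ref{lem:canonical}. For ``$\le$'' we give an explicit construction. Throughout, $\OPT(\emptyset)=0$.

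\emph{Lower bound.} Let $S^\star$ be a deadline-canonical optimal solution for $J$, and let $s=(t,y_s)$ be the service to which $p$ is assigned. Since $s$ covers $p$, we have $a_p\le t\le d_p$. Property~3 of Lemma~\ref{lem:canonical} gives $t\in\{d_r:r\in J\}$, and property~2 gives $y_s=x_p$ because $x_p$ is the maximum location in $J$. So $t$ lies in the required set and $s$ costs exactly $x_p$. Every request $r\in J^-(t)$ has $d_r<t$, so it is served by services of $S^\star$ at times strictly before $t$. Every request $r\in J^+(t)$ has $a_r>t$, so it is served by services strictly after $t$. Hence the services of $S^\star\setminus\{s\}$ at times $<t$ form a feasible solution for $J^-(t)$, and those at times $>t$ form a feasible solution for $J^+(t)$. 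These two sets are disjoint and neither contains $s$. Therefore
\[
\OPT(J)=\cost(S^\star)\ge x_p+\OPT\bigl(J^-(t)\bigr)+\OPT\bigl(J^+(t)\bigr).
\]

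\emph{Upper bound for the same $t$.} Take optimal solutions for $J^-(t)$ and $J^+(t)$, and add the service $(t,x_p)$. Every request $r\in J$ not in $J^-(t)\cup J^+(t)$ satisfies $d_r\ge t$ and $a_r\le t$. Since $x_r\le x_p$, the service $(t,x_p)$ covers it. Requests in $J^-(t)$ and $J^+(t)$ are covered by their respective sub-solutions; note that these two sets are disjoint because $d_r\ge a_r$. The union is therefore feasible for $J$, with cost equal to the right-hand side of \eqref{eq:farthest-separation}. This gives ``$\le$'', and hence equality.

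The main obstacle is the lower bound, specifically the claim that services of $S^\star$ at time exactly $t$ play no role in the two subproblems. This relies on the ``at most one service per deadline'' part of Lemma~\ref{lem:canonical}. In any case the strict inequalities in the definitions of $J^\pm(t)$ exclude time $t$ from both sub-solutions, and all remaining services at time $t$ are only discarded, which can only lower the cost we account for. We also need $\OPT(J)$ to be attained, which Lemma~\ref{lem:canonical} provides.
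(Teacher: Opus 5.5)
Your proof is correct and follows essentially the same route as the paper: both take a deadline-canonical optimal solution, let $t$ be the time of the service to which $p$ is assigned, and split the remaining services into those strictly before and strictly after $t$. The paper packages this as a single restructuring (reassigning $C(t)=\{r: a_r\le t\le d_r\}$ to $(t,x_p)$, then an exchange argument for optimality of the two parts), whereas you split it into the two inequalities; as your last paragraph concludes, the ``at most one service per deadline'' property is not needed, since any other services at time $t$ are simply discarded in the lower bound.
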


\begin{proof}
Take a deadline-canonical optimal solution for $J$, and let $s$ be the service to which $p$ is assigned.
Write $t$ for its service time.
Since $s$ covers $p$, we have $a_p\le t\le d_p$.
Lemma~\ref{lem:canonical} also implies that $t$ is the deadline of a request assigned to $s$.
Hence $t\in\{d_r:r\in J\}\cap[a_p,d_p]$.
The location of $s$ is exactly $x_p$: it is at least $x_p$ because $s$ covers $p$, and it is at most $x_p$ because $p$ is a farthest request in $J$.

Let $C(t)=\{r\in J:a_r\le t\le d_r\}$.
Every request in $C(t)$ has location at most $x_p$, so $(t,x_p)$ covers all of them.
Reassign every request in $C(t)$ to this service, remove these requests from the assignments of all other services, and delete any service that becomes unassigned.
The resulting solution is feasible and has cost no larger than the original optimum, so it remains optimal.

Every request outside $C(t)$ lies in exactly one of $J^-(t)$ and $J^+(t)$.
Moreover, a service covering a request in $J^-(t)$ must occur strictly before $t$, whereas a service covering a request in $J^+(t)$ must occur strictly after $t$.
No service can therefore cover requests from both instances.
The remaining services split into feasible solutions $S^-$ and $S^+$ for $J^-(t)$ and $J^+(t)$, respectively.
Each solution is optimal for its instance; otherwise, replacing it by a cheaper solution would improve the optimal solution for $J$.
This proves~\eqref{eq:farthest-separation}.
\end{proof}

We now define the states of the dynamic program.
Let $\tau_1<\tau_2<\cdots<\tau_m$ be the distinct request deadlines, where $m\le n:=\lvert I\rvert$, and define the boundaries $\tau_0=-\infty$ and $\tau_{m+1}=+\infty$.
For $0\le i<j\le m+1$, define $I(i,j)=\{r\in I:a_r>\tau_i,\ d_r<\tau_j\}$ and let $F(i,j)=\OPT\bigl(I(i,j)\bigr)$.
The original instance is $I(0,m+1)=I$.
If $I(i,j)$ is nonempty, choose a farthest request $p(i,j)\in\arg\max_{r\in I(i,j)}x_r$.

Only deadlines belonging to the current state are needed as split times.
For a nonempty state, define
\begin{equation*}
K(i,j)=\left\{k:\begin{array}{l}i<k<j,\\
a_p\le\tau_k\le d_p,\\
d_r=\tau_k\text{ for some }r\in I(i,j)\end{array}\right\}.
\end{equation*}
This set is nonempty because $d_p$ is a deadline in the current state and lies in $[a_p,d_p]$.
The farthest-request separation suggests the recurrence
\begin{equation}
\label{eq:offline-recurrence}
F(i,j)=\begin{cases}0, & I(i,j)=\varnothing, \\[1.2ex]
\displaystyle x_{p(i,j)}+\min_{k\in K(i,j)}\bigl\{F(i,k)+F(k,j)\bigr\}, & I(i,j)\neq\varnothing.\end{cases}
\end{equation}

The recurrence only refers to states with smaller boundary width.
Indeed, if $i<k<j$, then both $k-i$ and $j-k$ are smaller than $j-i$.
Algorithm~\ref{alg:dp} therefore evaluates the states in increasing order of $j-i$.
It stores an optimal split and a farthest request for each nonempty state so that an optimal solution can be reconstructed by using Algorithm~\ref{alg:build}.
Here and below, $\bot$ denotes a sentinel value indicating that the corresponding variable does not hold a valid value.

\begin{algorithm}[ht]
\caption{Dynamic programming algorithm for offline LAD}
\label{alg:dp}
\begin{algorithmic}[1]
\Require Requests $I=\{(a_r,x_r,d_r)\}_{r=1}^n$
\State Sort the distinct deadlines as $\tau_1<\cdots<\tau_m$
\State Set $\tau_0\gets-\infty$ and $\tau_{m+1}\gets+\infty$
\For{$w=1$ to $m+1$}
    \For{$i=0$ to $m+1-w$}
        \State $j\gets i+w$
        \State $F[i,j]\gets0$
        \State $\texttt{split}[i,j]\gets\bot$
        \State $\texttt{pivot}[i,j]\gets\bot$
        \If{$I(i,j)\neq\varnothing$}
            \State $p\gets p(i,j)$
            \State $k^\star\gets
                \arg\min_{k\in K(i,j)}
                \{F[i,k]+F[k,j]\}$
            \State $F[i,j]\gets
                x_p+F[i,k^\star]+F[k^\star,j]$
            \State $\texttt{split}[i,j]\gets k^\star$
            \State $\texttt{pivot}[i,j]\gets p$
        \EndIf
    \EndFor
\EndFor
\State Run Algorithm~\ref{alg:build} with $(i,j)=(0,m+1)$ to obtain an optimal solution $X^\star$
\State \Return $F[0,m+1]$ and $X^\star$
\end{algorithmic}
\end{algorithm}

\begin{algorithm}[ht]
\caption{Reconstruction of an optimal offline solution}
\label{alg:build}
\begin{algorithmic}[1]
\Require Indices $i,j$
\If{$\texttt{split}[i,j]=\bot$}
    \State \Return $\varnothing$
\EndIf
\State $k\gets\texttt{split}[i,j]$
\State $p\gets\texttt{pivot}[i,j]$
\State $X^-\gets\Call{Build}{i,k}$
\State $X^+\gets\Call{Build}{k,j}$
\State \Return
    $X^-\cup\{(\tau_k,x_p)\}\cup X^+$
\end{algorithmic}
\end{algorithm}

\begin{theorem}
\label{thm:complexity}
Algorithm~\ref{alg:dp} computes an optimal offline solution for LAD in $O(n^3)$ time and $O(n^2)$ space.
\end{theorem}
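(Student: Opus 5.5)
The proof has two parts: correctness of recurrence~\eqref{eq:offline-recurrence} and of the reconstruction, followed by a cost analysis. Correctness goes by induction on the width $j-i$, and the claim is that $F[i,j]=\OPT(I(i,j))$ and that Algorithm~\ref{alg:build} returns a feasible solution for $I(i,j)$ of exactly this cost. The base case, an empty state, is trivial.

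For a nonempty state, set $p=p(i,j)$. The first task is to show that the two halves of a split are themselves states. For $k\in K(i,j)$,
\[
\{r\in I(i,j): d_r<\tau_k\}=I(i,k) \quad\text{and}\quad \{r\in I(i,j): a_r>\tau_k\}=I(k,j).
\]
Both identities follow from $\tau_i<\tau_k<\tau_j$ and the definitions.

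\emph{Upper bound.} For each $k\in K(i,j)$, the service $(\tau_k,x_p)$ covers every request $r\in I(i,j)$ with $a_r\le\tau_k\le d_r$, since $x_r\le x_p$. Every other request of the state lies in $I(i,k)$ or in $I(k,j)$. Combining $(\tau_k,x_p)$ with inductively optimal solutions of these two states gives a feasible solution of cost $x_p+F(i,k)+F(k,j)$. This also shows that the output of Algorithm~\ref{alg:build} is feasible.

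\emph{Lower bound.} Apply Lemma~\ref{lem:farthest-separation} to $J=I(i,j)$. It yields a time $t$ that is the deadline of some request in $J$ and satisfies $t\in[a_p,d_p]$. Because $t$ is a deadline of a request in $I(i,j)$, we have $t=\tau_k$ for some $k$ with $\tau_i<a_r\le d_r=\tau_k<\tau_j$, so $i<k<j$ and therefore $k\in K(i,j)$. Then $J^-(t)=I(i,k)$ and $J^+(t)=I(k,j)$, and by induction
\[
\OPT(I(i,j))=x_p+F(i,k)+F(k,j),
\]
which is at least the minimum in the recurrence. This matching step is the only delicate point: we must confirm that the split time supplied by Lemma~\ref{lem:farthest-separation} is indexed strictly between $i$ and $j$, and that the half-instances coincide exactly with the DP states. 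Both facts follow from the strict inequalities in the definition of $I(i,j)$.

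\emph{Complexity.} There are $O(m^2)=O(n^2)$ states, and each stores $F$, $\texttt{split}$ and $\texttt{pivot}$, which gives $O(n^2)$ space. For each state, scanning all $n$ requests determines whether $I(i,j)$ is empty, selects $p(i,j)$, and marks which $\tau_k$ are deadlines of requests in the state, using a Boolean array of length $m$. Evaluating the minimum over at most $m$ candidates $k$ then takes $O(n)$ time, so each state costs $O(n)$ and the total is $O(n^3)$. Sorting costs $O(n\log n)$. Reconstruction visits each nonempty state at most once along the recursion tree, and each visit adds one service, so it costs $O(n)$ calls, which is at most $O(n^2)$ time.
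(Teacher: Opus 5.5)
Your proof is correct and follows essentially the same route as the paper. The upper bound uses the split of the state around the service $(\tau_k,x_p)$, the lower bound applies Lemma~\ref{lem:farthest-separation} and identifies the split time as $\tau_k$ with $k\in K(i,j)$, and the complexity bound uses the same $O(n)$-per-state count. The differences are cosmetic: you make the induction on $j-i$ explicit, you get $i<k<j$ from the request whose deadline is $t$ rather than from $p$, and you skip the paper's check that the reconstructed solution is deadline-canonical, which the theorem does not need.
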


\begin{proof}
We first prove the recurrence and hence the correctness of the algorithm.
The formula for an empty state is immediate.
Fix a nonempty state $I(i,j)$ and write $p=p(i,j)$.

For the upper bound, choose any $k\in K(i,j)$.
Every request in $I(i,j)$ lies in exactly one of the following three sets:
\begin{align*}
L_k&=\{r\in I(i,j):d_r<\tau_k\},\\ C_k&=\{r\in I(i,j):a_r\le\tau_k\le d_r\},\\ R_k&=\{r\in I(i,j):a_r>\tau_k\}.
\end{align*}
The definitions of the states give $L_k=I(i,k)$ and $R_k=I(k,j)$.
Since $p$ is farthest in $I(i,j)$, the service $(\tau_k,x_p)$ covers every request in $C_k$.
Combining this service with optimal solutions for $I(i,k)$ and $I(k,j)$ gives a feasible solution for $I(i,j)$.
Therefore
\begin{equation}
\label{eq:offline-upper}
F(i,j) \le x_p+ \min_{k\in K(i,j)}\{F(i,k)+F(k,j)\}.
\end{equation}

For the lower bound, apply Lemma~\ref{lem:farthest-separation} to $J=I(i,j)$.
It provides a split time $t$ that is the deadline of a request in the current state and satisfies $a_p\le t\le d_p$.
Hence $t=\tau_k$ for some $k\in K(i,j)$.
Because $p\in I(i,j)$, we also have $\tau_i<t<\tau_j$.
At this split time, $J^-(t)=I(i,k)$ and $J^+(t)=I(k,j)$.
Equation~\eqref{eq:farthest-separation} now gives
\begin{align*}
F(i,j) &=x_p+F(i,k)+F(k,j)\\ &\ge x_p+ \min_{\ell\in K(i,j)}\{F(i,\ell)+F(\ell,j)\}.
\end{align*}
Together with~\eqref{eq:offline-upper}, this proves recurrence~\eqref{eq:offline-recurrence}.
Both child states in every transition have smaller boundary width, so the bottom-up order in Algorithm~\ref{alg:dp} computes every table entry before it is used.
In particular, the returned value is $F(0,m+1)=\OPT(I)$.

We next verify the reconstructed solution.
At a nonempty state, assign to $(\tau_k,x_p)$ all requests in $C_k$.
This set contains $p$, so its largest location is exactly $x_p$.
The definition of $K(i,j)$ also guarantees a request $q\in I(i,j)$ with $d_q=\tau_k$.
Since $a_q\le d_q=\tau_k$, this request belongs to $C_k$.
Every request in $C_k$ has deadline at least $\tau_k$, and therefore $\min_{r\in C_k}d_r=\tau_k$.
Thus each reconstructed service has a nonempty assigned set, location equal to the largest assigned location, and time equal to the smallest assigned deadline.
Left descendants use only deadline indices smaller than $k$, and right descendants use only indices larger than $k$.
Hence no deadline is used twice, and the reconstructed optimal solution is deadline-canonical.

It remains to bound the resources used by the algorithm.
There are $m+2$ boundary indices and hence $\binom{m+2}{2}=O(m^2)=O(n^2)$ states.
For each state, one scan over the $n$ requests determines whether the state is empty, finds a farthest request, and identifies the deadlines present in the state.
The algorithm then examines at most $m=O(n)$ split indices.
Each transition reads two previously computed table entries and takes constant time.
Each state therefore requires $O(n)$ time, giving a total running time of $O(n^3)$.

The value, split, and pivot tables each contain $O(n^2)$ entries.
The scan of one state uses only $O(n)$ temporary space, and the reconstructed solution has at most one service per deadline.
The total space usage is therefore $O(n^2)$.
\end{proof}

\section{Experiments}
\label{sec:experiments}
In this section, we conduct numerical experiments to evaluate the performance of our proposed algorithms.
We generate data from times $1$ to $T = 100$ and locations $\{1,\ldots,N\}$ for $N=100$.
At every time--location pair $(t,x)$, the request count is sampled independently from a distribution.
We use three request-count laws following the experimental design of Bamas et al.~\cite{bamas2020primaldual}: Poisson with mean $1/N$; Lomax with shape $2$ and scale $1/N$, followed by unbiased randomized rounding; and ten-times iterated Poisson initialized at $1/N$.
Note that, for all these distributions, the expected total number of requests at each time is $1$.

We next perturb each instance with noise, and compute an optimal solution on this perturbed instance and use this as a prediction. 
For replacement rate $p$ from $0$ to $1$, we perturb each actual instance independently at every time--location pair.
With probability $p$, all actual requests at the pair are deleted; independently, with probability $p$, a fresh request count is sampled from the same distribution and new slacks are generated.
We compute an optimal solution for this perturbed instance using Algorithm~\ref{alg:dp} and use it as the raw advice.
To ensure that the advice-induced solution is feasible, we add an extra service action at the deadline and location of each request not covered by the advice.
The average competitive ratio (ACR) is defined as the average, over all trials, of the online algorithm’s total cost divided by the offline optimal cost.
A similar generation procedure has also been used in prior work~\cite{bamas2020primaldual,grigorescu2022learning}, where it is referred to as the “replacement rate” strategy.

\begin{figure*}[t]
\centering
\includegraphics[width=\textwidth]{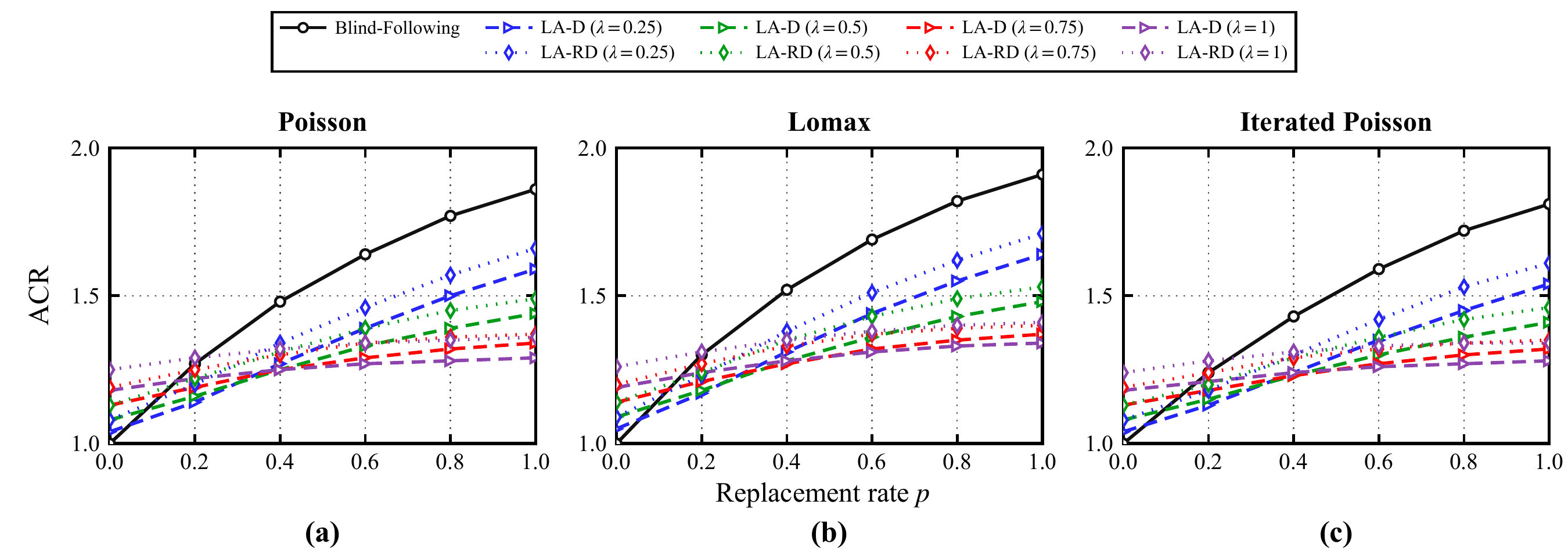}
\caption{The performance of algorithms under three distributions. (a) Poisson distribution. (b) Lomax distribution. (c) Iterated Poisson distribution.}
\label{fig:exp}
\end{figure*}

For each request distribution, we generate 100 independent random instances and evaluate Algorithm~\ref{alg:pd-doubling} (LA-D) and Algorithm~\ref{alg:learning-randomized} (LA-RD).
For the two learning-augmented algorithms, we use $\lambda\in\{0.25,0.5,0.75,1\}$.
Notably, when $\lambda=1$, our learning-augmented algorithms reduce to their respective classical online algorithms.
We also include Blind-Following as a baseline, which directly adopts the advice as a solution to the original instance.
The numerical results are shown in Fig.~\ref{fig:exp}, in which we can see the performance of the proposed algorithms.
At low replacement rates, both deterministic and randomized learning-augmented algorithms achieved lower ACRs than classical online baselines, with smaller $\lambda$ yielding better performance.
At high replacement rates, the learning-augmented algorithms exhibited higher ACRs than the classical online baselines, with smaller $\lambda$ leading to worse performance.
Moreover, despite stronger theoretical guarantees, the randomized algorithm performed worse empirically than its deterministic counterpart.

\section{Conclusions}
\label{sec:conclusion}
We considered line aggregation with deadlines under three levels of information.
With no advice, a single randomly shifted exponential grid is $e$-competitive against an oblivious adversary.
A matching lower bound shows that no randomized online algorithm can do better.
When a feasible solution is available as advice, our deterministic and randomized algorithms offer different consistency--robustness trade-offs.
With complete information, the interval structure of the line leads to an exact cubic-time dynamic program.

These results leave two main questions open.
The first is to characterize the best possible consistency--robustness trade-off.
In particular, it remains open whether a randomized algorithm can achieve near-perfect consistency while providing a substantially better robustness guarantee.
The second is to extend the tight guarantees for randomization and advice from paths to broader classes of trees.
In such an extension, the bounds should depend on structural features of the tree rather than on depth alone.

\bibliographystyle{plainnat}
\bibliography{bibliography}

\end{document}